\newcommand{\cL}{\mathcal{L}}
\newcommand{\bbR}{\mathbb{R}}      
\newcommand{\bbZ}{\mathbb{Z}}      
\newcommand*\widefbox[1]{\fbox{\hspace{0em}#1\hspace{0em}}}
\newtheorem{Thm}{Theorem}[section]
\begin{document}
\title{Rotating elastic string loops in flat and black hole spacetimes: stability, cosmic censorship and the Penrose process}
\author{Jos\'e Nat\'ario$^{1}$, Leonel Queimada$^{1,2}$ and Rodrigo Vicente$^{1,3}$\\
{\small $^1$ CAMGSD, Departamento de Matem\'{a}tica, Instituto Superior T\'{e}cnico, Universidade de Lisboa, Portugal}\\ 
{\small $^2$ Perimeter Institute for Theoretical Physics, Waterloo, Ontario N2L 2Y5, Canada}\\
{\small $^3$ CENTRA, Departamento de F\'{\i}sica, Instituto Superior T\'{e}cnico, Universidade de Lisboa, Portugal}}
\date{}
\maketitle
%
%
\begin{abstract}
We rederive the equations of motion for relativistic strings, that is, one-dimensional elastic bodies whose internal energy depends only on their stretching, and use them to study circular string loops rotating in the equatorial plane of flat and black hole spacetimes. We start by obtaining the conditions for equilibrium, and find that: (i) if the string's longitudinal speed of sound does not exceed the speed of light then its radius when rotating in Minkowski's spacetime is always larger than its radius when at rest; (ii) in Minkowski's spacetime, equilibria are linearly stable for rotation speeds below a certain threshold, higher than the string's longitudinal speed of sound, and linearly unstable for some rotation speeds above it; (iii) equilibria are always linearly unstable in Schwarzschild's spacetime. Moreover, we study interactions of a rotating string loop with a Kerr black hole, namely in the context of the weak cosmic censorship conjecture and the Penrose process. We find that: (i) elastic string loops that satisfy the null energy condition cannot overspin extremal black holes; (ii) elastic string loops that satisfy the dominant energy condition cannot increase the maximum efficiency of the usual particle Penrose process; (iii) if the dominant energy condition (but not the weak energy condition) is violated then the efficiency can be increased. This last result hints at the interesting possibility that the dominant energy condition may underlie the well known upper bounds for the efficiencies of energy extraction processes (including, for example, superradiance).
\end{abstract}
\tableofcontents
%
%
%
%
%
\section{Introduction}
The modern formulation of relativistic elasticity is due to Carter and Quintana~\cite{CQ72}, and has been further developed in \cite{Maugin78, KM92, Tahvildar98, BS02, KS03} (see also \cite{Wernig06} and references therein). Its applications range from the computation of the speeds of sound for various relativistic solids \cite{Carter73, Bento85} to the study of elastic equilibrium states \cite{Park00, KS04, ABS08, ABS09, BCV10, AC14, BCMV14} as well as dynamical settings \cite{Magli97, Magli98, CH07}.

In this paper we discuss relativistic strings, that is, one-dimensional elastic bodies whose internal energy depends only on their stretching, first studied by Carter \cite{Carter89, Carter89b} as models for superconducting cosmic strings.\footnote{There exists a wealth of literature dealing with (possibly current carrying) cosmic and Nambu-Goto strings moving in flat and curved spacetimes, see for instance \cite{Anderson03} and references therein. Most of these works correspond to specific choices of elastic laws, as opposed to the full generality which we adopt here.} More precisely, we consider axially symmetric string loops rotating in the equatorial plane of the Minkowski, Schwarzschild and Kerr spacetimes. We determine the equilibrium configurations, investigate their linear stability, and use the interaction between the string loops and the black hole to study cosmic censorship and the Penrose process. This analysis is particularly interesting because elastic strings, in a way, bridge the gap between particles and fields: they are extended objects without being completely spread throughout spacetime.

The organization of the paper is as follows: in Section \ref{section1} we re-derive the equations of motion for elastic strings starting from a Lagrangian density, and rewrite these equations as conservation of energy-momentum along the worldsheet plus the so-called {\em generalized sail equations} (the vanishing of the contraction between the energy-momentum tensor and the extrinsic curvatures of the worldsheet). We also show how to obtain conserved quantities for the string's motion in spacetimes with symmetries, compute the string's longitudinal and transverse speeds of sound, and give examples of simple elastic laws and their properties. In Section \ref{section2} we determine the equilibrium configurations for axially symmetric string loops rotating in the equatorial plane of the Kerr solution, and analyze in detail the particular cases of the Minkowski and Schwarzschild spacetimes. We prove that in flat spacetime the radius of a rotating string loop is always larger than its radius when at rest provided that the string's longitudinal speed of sound does not exceed the speed of light (this conclusion is not trivial due to the length contraction effect, and it is in fact false when string's longitudinal speed of sound does exceed the speed of light). Moreover, we show that in flat spacetime equilibria are linearly stable for rotation speeds below a certain threshold, higher than the string's longitudinal speed of sound, and linearly unstable for some rotation speeds above it, confirming and extending the results in \cite{CM93, Martin94}. Finally, we find that equilibria are always linearly unstable in Schwarzschild's spacetime. In Section \ref{section3} we consider the motion of an axially symmetric string loop rotating in the equatorial plane of an extremal Kerr black hole, and show that the loop cannot overspin the black hole provided that it satisfies the null energy condition. We also consider a Penrose process for elastic string loops, where an incoherent loop (dust ring) falls from infinity and breaks up inside the ergoregion into an elastic string loop plus a second dust ring which escapes to infinity. We find that if the elastic loop satisfies the dominant energy condition then the efficiency of this process cannot be higher than the maximum efficiency for the usual particle Penrose process; however, if the dominant energy condition (but not the weak energy condition) is violated then the efficiency can be increased. This last result hints at the interesting possibility that the dominant energy condition may underlie the well known upper bounds for the efficiencies of energy extraction processes (including, for example, superradiance).

We follow the conventions of \cite{MTW73, W84}; in particular, we use a system of units for which $c=G=1$. Greek letters $\alpha, \beta, \ldots$ represent spacetime indices, running from $0$ to $n$, small case Latin letters $i, j, \ldots$ represent spatial indices, running from $1$ to $n$ (or sometimes from $2$ to $n$), and capital Latin letters $A, B, \ldots$ represent indices in the string's worldsheet, taking the values $0$ and $1$. We used {\sc Mathematica} for symbolic and numerical computations, and also to produce various plots.
%
%
%
\section{Elastic string theory}\label{section1}
In this section, we use a variational approach to re-derive the equations of motion of an elastic string, for the reader's convenience and also to fix notation (see \cite{SPG87, CHT87, Larsen93, Carter92, Carter01, Anderson03, Carter11} for similar or related derivations). These equations are then shown to be equivalent to the conservation of an energy-momentum tensor defined on the worldsheet plus the vanishing of the contraction between the energy-momentum tensor and the extrinsic curvatures of the worldsheet (dubbed {\em generalized sail equations} in \cite{CM93}). We obtain the conserved quantities associated to Killing vector fields (as required by Noether's theorem), and compute the string's longitudinal and transverse speeds of sound, originally given in \cite{Carter89b}. Finally, we present examples of simple elastic laws and their properties.
\subsection{Lagrangian density}

We model a string moving on a $(n+1)$-dimensional spacetime $(M,g)$ by an embedding $X:\bbR \times I \to M$, where $I \subset \bbR$ is an interval labeling the points of the string.\footnote{In the case of string loops we identify the endpoints of $I$ to obtain an embedding $X:\bbR \times S^1 \to M$.} Thus, the curve $\tau \mapsto X(\tau,\lambda)$ is the worldline of the point of the string labeled by $\lambda \in I$. We assume that the parameter $\lambda \in I$ is the arclength in the string's unstretched configuration. 

The embedding $X$ induces a metric
\begin{equation}\label{metric}
\boxed{h_{AB} = g_{\mu\nu}(X) \partial_A X^\mu \partial_B X^\nu}
\end{equation}
on $\bbR \times I$, which we identify with its image $\Sigma = X(\bbR \times I)$ (sometimes called the string's {\em worldsheet}). If we choose a local orthonormal frame $\{E_0, E_1\}$ tangent to $\Sigma$ such that $E_0$ is the $4$-velocity of the string's particles, we must have
\begin{equation}
\begin{cases}
\displaystyle \frac{\partial X}{\partial \tau} = \alpha E_0 \\ \\
\displaystyle \frac{\partial X}{\partial \lambda} = \beta E_0 + \sigma E_1
\end{cases},
\end{equation}
for some smooth local functions $\alpha, \beta, \sigma$. Note that $|\sigma|$ is the factor by which the string is stretched according to an observer comoving with it, since $E_1$ gives the direction of simultaneity for such an observer. The components of the induced metric are then
\begin{equation}
(h_{AB}) = 
\left(
\begin{matrix}
-\alpha^2 & -\alpha \beta \\
- \alpha \beta & - \beta^2 + \sigma^2
\end{matrix}
\right),
\end{equation}
and so
\begin{equation} \label{det}
h \equiv \det(h_{AB}) = - \alpha^2 \sigma^2 = h_{00} \sigma^2.
\end{equation}
Defining the {\em number density} $n = \frac1{|\sigma|}$, we then have
\begin{equation} \label{n^2}
\boxed{n^2 = \frac{h_{00}}{h}}\,\,.
\end{equation}
To obtain the string's equations of motion we must choose an action
\begin{equation}
S = \int_{\bbR \times I} \cL(X,\partial X) \, d\tau d\lambda.
\end{equation}
For an elastic string whose internal energy density $\rho$ depends only on its stretching, $\rho = F(n^2)$, the Lagrangian density is (see for instance \cite{Bento85, KM92})
\begin{equation} \label{Lagrangian}
\cL = F(n^2) \sqrt{-h},
\end{equation}
where $h \equiv \det(h_{AB})$ and $n^2$ are given as functions of $(X, \partial X)$ from equations \eqref{metric} and \eqref{n^2}. Although it is not obvious, this Lagrangian density reduces to the usual Newtonian Lagrangian density for an elastic string in the appropriate limit, as shown in Appendix~\ref{appendixA}.

\subsection{Equations of motion} 

To find the equations of motion we must compute the variation $\delta \cL$ of the Lagrangian density resulting from a variation $\delta X$ of the embedding. Using the well-known formula for the variation of the determinant of the metric,
\begin{equation}
\delta h = h h^{AB} \delta h_{AB}
\end{equation}
(see for instance \cite{W84}), we obtain
\begin{equation}
\delta \cL = \left( F'(n^2) \frac{h_{00} h^{AB} - \delta_0^A \delta_0^B}{ \sqrt{-h}} + \frac12 F(n^2) \sqrt{-h} \, h^{AB} \right) \delta h_{AB}.
\end{equation}
By analogy with the energy-momentum tensor in general relativity, we define the string's energy-momentum tensor $T^{AB}$  by the relation
\begin{equation}
\delta \cL = - \frac12 \sqrt{-h} \, T^{AB} \delta h_{AB}
\end{equation}
(see for instance \cite{W84}). We will provide a better justification for this choice after obtaining the equations of motion. Note that
\begin{equation} \label{TAB}
\boxed{T^{AB} = 2 n^2 F'(n^2) U^A U^B + \left[2 n^2 F'(n^2) - F(n^2)\right] h^{AB}}\,\,,
\end{equation}
where $U^A$ is the four-velocity of the string's particles. Therefore, the string's energy density $\rho$ and the string's pressure $p$ are given by\footnote{Since the worldsheet is two-dimensional, this pressure is actually a force (tension or compression of the string).}
\begin{equation} \label{rhop}
\boxed{\rho = F(n^2)} \,\, , \qquad \boxed{p = 2 n^2 F'(n^2) - F(n^2)} \,\, .
\end{equation}
To derive the equations of motion, we note that
\begin{equation}
\delta h_{AB} = \partial_\alpha g_{\mu\nu} \delta X^\alpha \partial_A X^\mu \partial_B X^\nu + 2 g_{\mu\nu} \partial_A X^\mu \partial_B \delta X^\nu,
\end{equation}
and so
\begin{equation}
- 2 \delta \cL = \left[ \sqrt{-h} \, T^{AB} \partial_\alpha g_{\mu\nu} \partial_A X^\mu \partial_B X^\nu - \partial_B \left( 2 \sqrt{-h} \, T^{AB} g_{\mu\alpha} \partial_A X^\mu \right)\right] \delta X^\alpha + \partial_B \left( 2 \sqrt{-h} \, T^{AB} g_{\mu\alpha} \partial_A X^\mu \delta X^\alpha \right).
\end{equation}
Discarding the total divergence in Hamilton's principle
\begin{equation}
\delta \int_{\bbR \times I} \cL(X,\partial X) \, d\tau d\lambda = 0,
\end{equation}
we obtain the equations of motion in the form
\begin{equation} \label{motion}
\boxed{\frac1{\sqrt{-h}}\partial_B \left( \sqrt{-h} \, T^{AB}\partial_A X^\alpha \right) + T^{AB} \Gamma^\alpha_{\mu\nu} \partial_A X^\mu \partial_B X^\nu = 0}\,\,.
\end{equation}
These equations are closely related to the harmonic map/wave map/nonlinear sigma model equations (see for instance \cite{Jost02, Tao06, Rendall08}). 
\subsection{Adapted coordinates} 

To better understand the equations of motion, we extend the local coordinates $(x^A)=(\tau,\lambda)$ on the worldsheet $\Sigma$ to a local coordinate system $(x^A,x^i)$ defined on a neighborhood of $\Sigma$ in the following way: we choose an orthonormal frame $\{E_2, \ldots, E_n\}$ for the normal bundle of the worldsheet, and parameterize by $(x^A, x^i)$ the point $\exp_p(x^i E_i)$, where $\exp_p$ is the geodesic exponential map and $p \in \Sigma$ is the point with coordinates $(x^A)$.\footnote{In the case of string loops we take the coordinate $x^1$ (which extends $\lambda$) to be periodic.} The worldsheet is given in these coordinates by $x^i=0$, and the spacetime metric by
\begin{equation}
g = g_{AB} dx^A dx^B + 2 g_{Ai} dx^A dx^i + g_{ij} dx^i dx^j.
\end{equation}
Note that on $\Sigma$ we have
\begin{equation}
g_{|_{\Sigma}} = h_{AB} dx^A dx^B + \delta_{ij} dx^i dx^j.
\end{equation}
The tensor
\begin{equation}
K^i_{AB} = \frac12\partial_i g_{AB},
\end{equation}
defined on $\Sigma$, is called the {\em extrinsic curvature} (or {\em second fundamental form}) of $\Sigma$ in the direction of $E_i$. It easily seen that on $\Sigma$
\begin{equation}
\Gamma^i_{AB} = -K^i_{AB}
\end{equation}
and
\begin{equation}
\Gamma^C_{AB} = \overline{\Gamma}^C_{AB},
\end{equation}
where $\overline{\Gamma}^C_{AB}$ are the Christoffel symbols for the Levi-Civita connection $\overline{\nabla}$ of $h_{AB}$. In this coordinate system, the embedding is simply given by $(X^A,X^i)=(x^A,0)$, and so we can write the first two components of equation~\eqref{motion} as
\begin{equation} \label{conservation}
\frac1{\sqrt{-h}}\partial_B \left( \sqrt{-h}T^{BC} \right) + T^{AB} \overline{\Gamma}^C_{AB} = 0,
\end{equation}
and the last $n-1$ as the so-called {\em generalized sail equations}
\begin{equation} \label{constraints}
\boxed{T^{AB} K^i_{AB} = 0}\,\,.
\end{equation}
Using the well-known formula (see for instance \cite{W84})
\begin{equation}
\partial_B \log \sqrt{-h} = \overline{\Gamma}^A_{BA},
\end{equation}
equation~\eqref{conservation} is easily seen to be equivalent to
\begin{equation} \label{conservation2}
\boxed{\overline{\nabla}_B T^{BC} = 0}\,\,.
\end{equation}
This justifies the choice of $T^{AB}$ as the string's energy-momentum tensor. 

The equations of motion of the string can then be understood as constraints of the geometry of the worldsheet, given by \eqref{constraints}, plus conservation of energy-momentum, given by \eqref{conservation2}.\footnote{These equations occur for other extended objects such as branes \cite{Carter92, Carter01, Carter11} and blackfolds \cite{EHNO10}.} For Nambu-Goto strings, for instance, where $T_{AB}$ is proportional to $h_{AB}$, the constraints are the condition that the worldsheet is a minimal surface, and the conservation equation is automatically satisfied.

Interestingly, the component of the conservation equations along $U^A$ is always trivial, even in the general case. Indeed, we have the identity
\begin{equation}
\overline{\nabla}_A (n \, U^A) = \frac1{\sqrt{-h}} \partial_A (\sqrt{-h} \, n \, U^A) = \frac1{\sqrt{-h}} \partial_A \delta_0^A = 0
\end{equation}
({\em particle number conservation}), and so
\begin{equation}
-U_B \overline{\nabla}_A T^{AB} = \overline{\nabla}_U \rho + (\rho + p)\overline{\nabla}_A U^A = 2n F'(n^2) \overline{\nabla}_U n + 2n^2 F'(n^2) \overline{\nabla}_A U^A = 0.
\end{equation}

\subsection{Conserved quantities} 

If $(M,g)$ admits a Killing vector field $\xi$,
\begin{equation}
\nabla_{(\mu} \xi_{\nu)} = 0,
\end{equation}
then in the coordinates above we have
\begin{equation}
\partial_{(A} \xi_{B)} + \overline{\Gamma}^C_{AB} \xi_{C} + \Gamma^i_{AB} \xi_i = 0 \Leftrightarrow \overline\nabla_{(A} \xi_{B)} 
= K^i_{AB} \xi_i,
\end{equation}
that is, the projection of $\xi$ on $T\Sigma$ is not, in general, a Killing vector field of $h_{AB}$. Nevertheless,
\begin{equation}
\overline\nabla_{A} (T^{AB} \xi_{B}) =  T^{AB} K^i_{AB} \xi_i = 0,
\end{equation}
in view of \eqref{constraints}, that is, the vector field
\begin{equation}
\boxed{j^A = T^{AB} \xi_{B}}
\end{equation}
is divergenceless on $\Sigma$. As a consequence, the quantity
\begin{equation}
E^\xi = \int_{\{\tau=\text{constant}\}} j^A \nu_A \sqrt{h_{11}} \, d\lambda 
\end{equation}
is conserved, where 
\begin{equation}
\nu_A = \frac{\delta^0_A}{\sqrt{-h^{00}}} = \frac{\sqrt{-h}}{\sqrt{h_{11}}} \, \delta^0_A 
\end{equation}
is the past-pointing normal to the spacelike curve $\{\tau=\text{constant}\}$. In other words, we have the conserved quantity
\begin{equation}\label{conserved_quantity}
\boxed{E^\xi = \int_{\{\tau=\text{constant}\}} j^0 \sqrt{-h} \, d\lambda}\,\,.
\end{equation}
\subsection{Speeds of sound}
The speeds of local perturbations traveling on a string can be obtained by linearizing the equations of motion about a (possibly stretched) stationary string in Minkowski spacetime, aligned, say, with the $x$-axis. This corresponds to taking terms up to quadratic order in the Lagrangian obtained from the embedding
\begin{equation}
\begin{cases}
t(\tau, \lambda) = \tau \\
x(\tau, \lambda) = {n_0}^{-1} \lambda + \delta x(\tau, \lambda) \\
y(\tau, \lambda) = \delta y(\tau, \lambda) \\
z(\tau, \lambda) = \delta z(\tau, \lambda)
\end{cases} .
\end{equation}
To compute $h_{00}$ and $h$ to quadratic order it suffices to use the approximation
\begin{equation}
(h_{AB}) = 
\left(
\begin{matrix}
- 1 + \delta \dot{x}^2 + \delta \dot{y}^2 + \delta \dot{z}^2 & {n_0}^{-1} \delta \dot{x} \\
{n_0}^{-1} \delta \dot{x} & {n_0}^{-2} + 2 {n_0}^{-1} \delta x' + {\delta x'}^2 + {\delta y'}^2 + {\delta z'}^2  
\end{matrix}
\right),
\end{equation}
where $\,\,\dot{} \equiv \frac{\partial}{\partial \tau}$ and $\,\,'\equiv \frac{\partial}{\partial \lambda}$, so that
\begin{equation}
n^2 = \frac{h_{00}}{h} = {n_0}^2 \left( 1 - \delta \dot{x}^2 - 2 {n_0} \delta x' + 3 {n_0}^2 {\delta x'}^2 - {n_0}^2 {\delta y'}^2 - {n_0}^2 {\delta z'}^2\right).
\end{equation}
Using the Taylor formula to second order,
\begin{equation}
F(n^2) = F({n_0}^2) + F'({n_0}^2) (n^2 - {n_0}^2) + \frac12 F''({n_0}^2) (n^2 - {n_0}^2)^2,
\end{equation}
and discarding constants and total divergences, we finally obtain
\begin{align}
\cL = F(n^2) \sqrt{-h} = \, & F'({n_0}^2) {n_0}^2 \left[ \left( 2{n_0}^4 \frac{F''({n_0}^2)}{F'({n_0}^2)} + {n_0}^2 \right) {\delta x'}^2 - \delta \dot{x}^2 \right] \nonumber \\
& + \frac12 F({n_0}^2) {n_0}^2 \left[ \left( {n_0}^2 - 2{n_0}^4 \frac{F'({n_0}^2)}{F({n_0}^2)} \right) {\delta y'}^2 - \delta \dot{y}^2 \right] \\
& + \frac12 F({n_0}^2) {n_0}^2 \left[ \left( {n_0}^2 - 2{n_0}^4 \frac{F'({n_0}^2)}{F({n_0}^2)} \right) {\delta z'}^2 - \delta \dot{z}^2 \right]. \nonumber
\end{align}
Therefore $\delta x$ satisfies the wave equation in the coordinates $(\tau,\lambda)$ with wave speed
\begin{equation}
c' = n_0 \sqrt{2{n_0}^2 \frac{F''({n_0}^2)}{F'({n_0}^2)} + 1},
\end{equation}
whereas $\delta y$ and $\delta z$ satisfy the wave equation with wave speed
\begin{equation}
s' = n_0 \sqrt{1 - 2{n_0}^2 \frac{F'({n_0}^2)}{F({n_0}^2)}}.
\end{equation}
Since $\lambda = n_0 x$ for the stretched string, we see that the physical speed of sound for longitudinal waves is
\begin{equation} \label{c^2}
\boxed{c = \sqrt{2n^2 \frac{F''(n^2)}{F'(n^2)} + 1} = \sqrt{\frac{dp}{d\rho}}}\,\,,
\end{equation}
the same expression as the speed of sound for a perfect fluid (see for instance \cite{Christodoulou97}), whereas the speed of sound for transverse waves is given by
\begin{equation}
\boxed{s = \sqrt{1 - 2n^2 \frac{F'(n^2)}{F(n^2)}} = \sqrt{-\frac{p}{\rho}}}\,\,,
\end{equation}
generalizing the well known classical result. 

As noted in \cite{Carter89b}, a necessary condition for the stability of the stretched string is that $c$ and $s$ be real,\footnote{Otherwise there would exist exponentially growing modes in the limit of small wavelengths.} that is, $\frac{dp}{d\rho}\geq 0$ and $p \leq 0$.
\subsection{The choice of elastic law}\label{warm}
There are many possible choices for the so-called {\em elastic law} $\rho = F(n^2)$, each corresponding to a different kind of elastic string. Some important examples are (for a given constant energy density $\rho_0>0$ of the unstretched string):
\begin{itemize}
\item
\underline{(Non-prestressed) strings with constant longitudinal speed of sound $c\geq 0$}:  These are given by
\begin{equation} \label{rigid}
\rho = \frac{\rho_0}{c^2+1} \left(n^{c^2+1} + c^2\right), \qquad \text{yielding} \qquad p = \frac{\rho_0 c^2}{c^2+1} \left(n^{c^2+1} - 1\right).
\end{equation}
For $c=1$ we obtain the so-called rigid string \cite{N14}, and for $c=0$ we have an incoherent (dust) string.
\item
\underline{Strings with constant transverse speed of sound $s \geq 0$}: This model corresponds to choosing
\begin{equation}
\rho = \rho_0 n^{1-s^2}, \qquad \text{ whence } \qquad p = - s^2 \rho.
\end{equation}
For $s=1$ we obtain the Nambu-Goto string, and for $s=0$ we again have a dust string.
\item
\underline{``Warm'' cosmic string model with mass parameter $m \geq 0$} \cite{Carter90a, Vilenkin90}: These are obtained from
\begin{equation}
\rho = \sqrt{({\rho_0}^2 - m^4)n^2 + m^4}, \qquad \text{ implying } \qquad p = - \frac{m^4}{\rho}
\end{equation}
(with $m^2 < \rho_0$). In this case, the longitudinal and transverse speeds of sound coincide. For $m=0$ we again have a dust string.
\end{itemize}
Depending on the elastic law, the string may have different properties:
\begin{itemize}
\item
\underline{Existence of a relaxed configuration}: If the pressure is zero when the string is not stretched nor compressed (that is, if the string is not prestressed) then $F$ must satisfy
\begin{equation} \label{p(1)=0}
2F'(1) = F(1).
\end{equation}
Of the three models above, only the first satisfies this condition.
\item
\underline{Weak energy condition}: The weak energy condition $\rho \geq 0$ and $\rho + p \geq 0$ is equivalent to
\begin{equation} \label{weak}
F(n^2) \geq 0 \text{ and } F'(n^2) \geq 0.
\end{equation}
In particular, if the string satisfies the weak energy condition then $\rho$ is a nondecreasing function of $n^2$. All the models above satisfy this condition.
\item
\underline{Dominant energy condition}: The dominant energy condition $\rho \geq p \geq -\rho$ is equivalent to
\begin{equation} \label{dominant}
F(n^2) \geq n^2 F'(n^2) \geq 0.
\end{equation}
As is well known, if the string satisfies the dominant energy condition then it also satisfies the weak energy condition. Of the three models above, only the first two satisfy the dominant energy condition, and only for $c \leq 1$ and $s \leq 1$.\footnote{It is clear that if an elastic string satisfies the dominant energy condition then its transverse speed of sound cannot exceed the speed of light.}
\item
\underline{Well defined longitudinal speed of sound}: If the longitudinal speed of sound is well defined then from \eqref{c^2} we must have $F'(n^2)\neq 0$ and $\frac{dp}{d\rho} \geq 0$. Of the three models above, only the first and the third satisfy this condition.\footnote{Technically, the second model also satisfies this condition in the trivial case $s=0$.} If the string also satisfies the weak energy condition then $\rho$ is a strictly increasing function of $n^2$, and hence $p$ is a nondecreasing function of $n^2$.
\end{itemize}
%
%
\section{Rigidly rotating string loops}\label{section2}
In this section we find the equilibrium conditions for axially symmetric string loops rotating in the equatorial plane of a Kerr solution with mass $M\geq 0$ and angular momentum $Ma$. In the Minkowski case $M=a=0$, we prove that the radius of a rotating string loop is always larger than its radius when at rest provided that the string's longitudinal speed of sound does not exceed the speed of light. This is a nontrivial statement because the length contraction effect must be taken into account; it is in fact false when string's longitudinal speed of sound does exceed the speed of light. We also show that equilibria are linearly stable for rotation speeds below a certain threshold, higher than the string's longitudinal speed of sound, and linearly unstable for some rotation speeds above it, confirming and extending the results of Carter and Martin \cite{CM93, Martin94}. Finally, we find that equilibria are always linearly unstable in the Schwarzschild case $M>0$, $a=0$.

Equilibrium configurations of Nambu-Goto and cosmic strings in black hole spacetimes have been previously analyzed in \cite{FSZ89, CF89, Carter90, CFH91, Larsen91, FS04, KF08, AA09}, and rigidly rotating strings were considered in \cite{FHV97, OIKNS08, II12, KIT16}. Stability issues for Nambu-Goto and cosmic strings have been discussed in \cite{Carter89b, FS91, LS93, Larsen94a, Larsen94b, BS09}.
\subsection{Equilibrium conditions}
Let us start by considering the Kerr solution with mass $M\geq 0$ and angular momentum $Ma$, given in Boyer-Lindquist coordinates by
\begin{equation}\label{K_metric}
ds^2 = - \left( 1 - \frac{2Mr}{\rho^2} \right) dt^2 + \frac{\rho^2}{\Delta} dr^2 + \rho^2 d \theta^2 + \left(r^2 + a^2 + \frac{2Mra^2}{\rho^2} \sin^2 \theta\right) \sin^2 \theta d\varphi^2 - \frac{4Mra\sin^2 \theta}{\rho^2} dt d\varphi,
\end{equation}
where $\rho^2 = r^2 + a^2 \cos^2 \theta$ and $\Delta = r^2-2Mr+a^2$. As explained above, a string in this geometry is described by an embedding, which we take to be of the form
\begin{equation} \label{embed}
\begin{cases}
t(\tau,\lambda)=\tau \\
r(\tau,\lambda) = R \\
\theta(\tau,\lambda) = \frac{\pi}2 \\
\varphi(\tau,\lambda) = \Omega \tau +  k \lambda
\end{cases},
\end{equation}
where $R$, $\Omega$ and $k$ are constants and $\lambda \in \left[0,\frac{2\pi}{k}\right]$. This embedding describes an axially symmetric elastic string loop rotating in the equatorial plane with angular velocity $\Omega$, in equilibrium at a constant radius $R$. From the definition of $\lambda$, it follows that the radius of the unstretched string loop in flat spacetime is
\begin{equation}
\boxed{R_0 = \frac1k} \,\, .
\end{equation}
The metric induced by the embedding on the worldsheet is
\begin{equation}
(h_{AB}) = 
\left(
\begin{matrix}
-V +\Sigma \Omega^2-\frac{4aM\Omega}{R} & -\frac{2aMk}{R}+\Sigma \Omega k\\
-\frac{2aMk}{R}+\Sigma \Omega k & \Sigma k^2 
\end{matrix}
\right),
\end{equation}
where
\begin{equation}
\boxed{V=1-\frac{2M}{R}} \quad \text{ and } \quad \boxed{\Sigma=R^2+a^2+\frac{2Ma^2}{R}} \,\,. 
\end{equation}
The determinant is
\begin{equation}
h=k^2 \left(-V \Sigma-\frac{4 a^2 M^2}{R^2}\right)\, ,
\end{equation}
and so
\begin{equation} \label{n^2rot}
n^2 = \frac{h_{00}}{h} = \frac{V -\Sigma \Omega^2+\frac{4aM \Omega}{R}}{ k^2 \left(\Sigma V +\frac{4 a^2M^2}{R^2}\right)}.
\end{equation}
Noting that the velocity of the string with respect to the {\em zero angular momentum observers} (ZAMOs) is
\begin{equation} \label{velocity}
\boxed{v = \frac{\sqrt{\Sigma}\left(\Omega- \frac{2aM}{R \Sigma}\right)}{\sqrt{V+\frac{4 a^2M^2}{R^2 \Sigma}}}}\,\,,
\end{equation}
we can rewrite equation~\eqref{n^2rot} in the suggestive form
\begin{equation} \label{R/R_0}
\boxed{\frac{R}{R_0} = \frac1n \sqrt{1-v^2}}\,\,,
\end{equation}
relating the rotating loop's radius to the unstretched radius by the combined effects of stretching and length contraction.

After a long but straightforward computation, the equations of motion \eqref{motion} boil down to
\begin{equation} \label{Kerr_eq}
\boxed{\frac{p R^2 (M-R)}{a^2+R^2V}=\frac{(p+\rho ) \left(M (a \Omega-1)^2-R^3 \Omega^2\right)}{a^2 \Omega^2+R^2\Omega^2-V+\frac{2 M}{R} \left((a \Omega-1)^2-1 \right)}}\,\,.
\end{equation}
In fact, this is the generalized sail equation \eqref{constraints}; the conservation equations \eqref{conservation2} hold automatically for this embedding. Since from~\eqref{rhop} we have $\rho=\rho(n^2)$ and $p=p(n^2)$, equations~\eqref{velocity}, \eqref{R/R_0} and \eqref{Kerr_eq} relate the four unknown quantities $v$, $\Omega$, $R$ and $n$ ($R_0$ is a known parameter of the loop). Hence, for each value of $R$, say, one can solve these equations to find $v$, $\Omega$ and $n$.

It is interesting to note that the equilibrium configuration for a non-rotating string loop ($\Omega=0$) satisfies
\begin{equation}
\frac{p}{\rho} = \frac{M\left(VR^2 + a^2\right)}{R^3V^2 - Ma^2} .
\end{equation}
In particular, we see that if $a \neq 0$ then $p$ necessarily blows up at some radius $R_* > 2M$ (still outside the ergoregion), and even becomes negative for $R<R_*$, contrary to what the Newtonian intuition would suggest. We explain this bizarre behavior in Appendix~\ref{appendixB}.

\subsection{Radius of the rotating string loop and the speed of sound} 

In Minkowski's spacetime ($M=a=0$), equation~\eqref{Kerr_eq} reduces to (using also equation~\eqref{velocity})
\begin{equation} \label{v^2}
\boxed{v^2 = - \frac{p}{\rho}}\,\,.
\end{equation}
Remarkably, the loop's rotation speed coincides with the transverse speed of sound. We will see later on some interesting consequences of this fact. From \eqref{R/R_0} we then have
\begin{equation}
\frac{R}{R_0} = \frac1n \sqrt{\frac{\rho + p}{\rho}} = \sqrt{\frac{2 F'(n^2)}{F(n^2)}}.
\end{equation}
For a rigid string loop, for instance, we obtain
\begin{equation}
\frac{R}{R_0} = \sqrt{\frac{2}{n^2 + 1}} = \sqrt{1-\frac{p}{\rho}} = \sqrt{1+v^2},
\end{equation}
in agreement with \cite{Brotas68, McCrea71}.

Let us now assume that the string admits a relaxed configuration, satisfies the weak energy condition and possesses a well defined longitudinal speed of sound. Recall that in particular $\rho$ is a strictly increasing function of $n^2$, and $p$ is a nondecreasing function of $n^2$. Suppose that
\begin{equation}
\frac{R}{R_0} \leq 1 \Leftrightarrow 2 F'(n^2) - F(n^2) \leq 0
\end{equation}
for some $v > 0$. From \eqref{v^2} we must have $p < 0$, and hence $n^2 < 1$. Since from \eqref{p(1)=0}
\begin{equation}
2 F'(1) - F(1) = 0,
\end{equation}
we see that there exists $x \in (n^2 , 1)$ such that
\begin{equation}
2 F''(x) - F'(x) \geq 0.
\end{equation}
Therefore
\begin{equation}
2 \frac{F''(x)}{F'(x)} \geq 1 \Rightarrow c^2 \geq x + 1 > 1,
\end{equation}
where $c$ is the longitudinal speed of sound corresponding to the number density $\sqrt{x}$. Thus we have proved the following result:

\begin{Thm}
In Minkowski's spacetime, the radius $R$ of a rotating string loop which admits a relaxed configuration of radius $R_0$, satisfies the weak energy condition and possesses a well defined longitudinal speed of sound not exceeding the speed of light always satisfies $R> R_0$.
\end{Thm}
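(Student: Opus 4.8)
The plan is to argue by contradiction, essentially the way the discussion preceding the statement does, but organizing it cleanly as a self-contained proof. Suppose that for some rotation speed $v>0$ the equilibrium radius satisfies $R\le R_0$. By equation~\eqref{v^2} we have $v^2 = -p/\rho$, and since $v>0$ the weak energy condition (so $\rho>0$) forces $p<0$; by \eqref{rhop} this reads $2n^2F'(n^2) - F(n^2) < 0$, i.e. $2F'(n^2) < F(n^2)/n^2$. On the other hand, from $\frac{R}{R_0} = \sqrt{2F'(n^2)/F(n^2)}\le 1$ we get $2F'(n^2)\le F(n^2)$. The plan is to convert these inequalities, together with the relaxed-configuration identity $2F'(1)=F(1)$ from \eqref{p(1)=0}, into a statement about the sign of $2F''-F'$ on an interval, and then into a bound $c^2>1$ contradicting the hypothesis that the longitudinal speed of sound does not exceed the speed of light.

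The key steps, in order, are as follows. First, I would record that under the stated hypotheses (relaxed configuration, weak energy condition, well-defined longitudinal speed of sound) $\rho = F(n^2)$ is strictly increasing and $p$ is nondecreasing as functions of $n^2$; in particular $F>0$ and $F'>0$ on the relevant range. Second, from $p<0$ and the fact that $p$ vanishes at $n^2=1$ (the relaxed configuration) together with $p$ nondecreasing, I conclude $n^2<1$. Third, I consider the auxiliary function $G(x) := 2F'(x) - F(x)/x$ — or more simply the function $x\mapsto 2F'(x)-F(x)$ — and note that at $x=1$ it vanishes while at $x=n^2<1$ it is $\le 0$ (from $R\le R_0$). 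Applying the mean value theorem to $x\mapsto 2F'(x)-F(x)$ on $[n^2,1]$ (where its value goes from something $\le 0$ up to $0$, hence the difference quotient is $\ge 0$) yields a point $x\in(n^2,1)$ with $2F''(x)-F'(x)\ge 0$. Fourth, at that point $2F''(x)/F'(x)\ge 1$, so $c^2 = 2x\,F''(x)/F'(x) + 1 \ge x+1$; but I must still rule out equality forcing $c^2\le 1$, which is why I need the strict inequality $n^2<1$: since $x>n^2$ I could instead hope $x>0$ suffices, but to get $c^2>1$ I only need $x>0$, which is automatic — actually the conclusion $c^2\ge x+1>1$ follows already from $x>0$, so the strict bound is immediate.

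Let me restate that fourth step more carefully, since it is the delicate point. From $2F''(x)-F'(x)\ge 0$ and $F'(x)>0$ we get $2F''(x)/F'(x)\ge 1$, hence $c^2 = 1 + 2x\cdot F''(x)/F'(x) \ge 1 + x$. Since $x\in(n^2,1)$ with $n^2>0$, we have $x>0$ and therefore $c^2>1$: a longitudinal speed of sound exceeding the speed of light at number density $\sqrt{x}$, contradicting the hypothesis. This yields $R>R_0$, which is the assertion of the theorem.

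I expect the main obstacle to be bookkeeping rather than conceptual: one must be scrupulous that all the $F,F',F''$ appearing are evaluated at points in a range where the structural hypotheses (weak energy condition, $F'\neq 0$) genuinely apply, so that dividing by $F'$ and by $F$ is legitimate and the sign manipulations are valid; in particular one should check that the relevant values of $n^2$ between $n^2$ and $1$ all lie in the physical domain of the elastic law. A secondary subtlety is the precise form of the mean value argument — it is cleanest to apply it to $x\mapsto 2F'(x)-F(x)$ directly rather than to $p(x)/x^2$ or similar, and to note that "$\le 0$ at the left endpoint, $=0$ at the right endpoint" gives a nonnegative slope somewhere in between. Everything else (the reduction $v^2=-p/\rho$, the formula $R/R_0 = \sqrt{2F'/F}$, the identity $2F'(1)=F(1)$, and the definition of $c$) is already available from the earlier parts of the paper and can be cited directly.
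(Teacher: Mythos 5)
Your proposal is correct and follows essentially the same route as the paper's own argument: derive $p<0$ and hence $n^2<1$ from $v^2=-p/\rho$, use $R\le R_0\Leftrightarrow 2F'(n^2)-F(n^2)\le 0$ together with $2F'(1)-F(1)=0$ and the mean value theorem to produce $x\in(n^2,1)$ with $2F''(x)-F'(x)\ge 0$, and conclude $c^2\ge 1+x>1$, contradicting $c\le 1$. The paper leaves the mean value step implicit ("we see that there exists $x$\dots"), whereas you spell it out, but the content is identical.
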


This is something that one would naively expect due to the balance between the centrifugal and elastic forces, but is not obvious in view of the length contraction term in equation~\eqref{R/R_0}. Indeed, for a string loop with constant longitudinal speed of sound $c > 1$, for example, we have
\begin{equation}
\lim_{n\to 0}\frac{R}{R_0} = \lim_{n \to 0}\sqrt{\frac{(c^2+1)n^{c^2-1}}{n^{c^2+1}+c^2}} = 0.
\end{equation}

None of the above applies if $M>0$, which is again expected: if we have gravity besides the centrifugal and elastic forces then it is possible to have equilibrium configurations even if the string is compressed. For instance, if the string loop is at rest with respect to the ZAMOs, that is, if $\Omega = \frac{2aM}{R \Sigma}$, equation~\eqref{Kerr_eq} yields
\begin{equation}
\frac{p}{\rho} = \frac{M(R^4 + 2a^2R^2V + a^4)}{(R^3 - Ma^2)(R^2V + a^2)} > 0,
\end{equation}
and so $p > 0$ outside the ergoregion (as opposed to the case $\Omega=0$). Since $p$ is a nondecreasing function of $n^2$, we have $n^2 > 1$, and so from \eqref{R/R_0}
\begin{equation}
\frac{R}{R_0} = \frac1n < 1.
\end{equation}
\subsection{Linear Stability}
To analyse the linear stability of rotating string loops we consider the embedding
\begin{equation} \label{embed2}
	\begin{cases}
		t(\tau,\lambda)=\tau \\
		r(\tau,\lambda) = R + \delta r(\tau, \lambda)\\
		\theta(\tau, \lambda)=\frac{\pi}{2}+\delta \theta(\tau, \lambda) \\
		\varphi(\tau,\lambda) = \Omega \tau +  k \lambda + \delta \varphi(\tau,\lambda)
	\end{cases}
\end{equation}
around an equilibrium configuration (satisfying \eqref{Kerr_eq}). To simplify the stability analysis, we consider only the zero angular momentum case $a=0$. We also assume that the string loop satisfies the weak energy condition and has well defined longitudinal and transverse speeds of sound $c$ and $s$ (a necessary condition for stability, as noted in \cite{Carter89b}), with $c,s \leq 1$. Substituting \eqref{embed2} in \eqref{K_metric} we obtain, to first order,

\begin{equation}
 (h_{AB}) = 
 \left(
\begin{matrix}
-V-\frac{2 M}{R^2} \delta r + \Omega^2 (R^2+2 R \, \delta r)+2 R^2 \Omega \, \delta \dot{\varphi} & \, \, R^2 (k\, \Omega + \Omega \, \delta \varphi'+k \, \delta \dot{\varphi})+2 R\, \Omega\, k \, \delta r  \\
R^2 (k\, \Omega + \Omega \, \delta \varphi'+k \, \delta \dot{\varphi})+2 R\, \Omega\, k \, \delta r &  R^2(k^2+2k \, \delta \varphi')+2 R\, k^2 \, \delta r
\end{matrix}
\right) ,
\end{equation}
where once again $\,\,\dot{} \equiv \frac{\partial}{\partial \tau}$ and $\,\,'\equiv \frac{\partial}{\partial \lambda}$.
Therefore we have, to first order,
\begin{equation}
h = - V k^2 R^2-2 (k^2 M \, \delta r+V k^2 R  \, \delta r+V k R^2  \, \delta \varphi')
\end{equation}
and
\begin{equation}
n^2 = \frac{1}{(k R)^2}-\frac{\Omega^2}{V k^2}-2\frac{(V^2 k-k M R \,\Omega^2) \delta r+ (V^2 R -V R^3 \Omega^2 ) \delta \varphi'+V k R^3 \Omega \, \delta \dot{\varphi}}{V^2 k^3 R^3} .
\end{equation}
Substituting in \eqref{TAB} and \eqref{motion}, we obtain the linearized equations of motion
\begin{equation} \label{motion_linear1}
\boxed{A \, \delta r'+B \, \delta \varphi ''+ C \, \delta \dot{r}+ D \, \delta \dot{\varphi}' + E \, \delta \ddot{\varphi}=0} \,\, ,
\end{equation}
\begin{equation} \label{motion_linear2}
\boxed{F \, \delta r + G \, \delta \varphi'+H \, \delta r''+I \, \delta \dot{\varphi}+ J \, \delta \dot{r}'+L \, \delta \ddot{r}=0} \,\, ,
\end{equation}
\begin{equation} \label{motion_linear3}
\boxed{N\, \delta \theta + H\, \delta \theta''+ J\, \delta \dot{\theta}'+ L\, \delta \ddot{\theta} =0} \,\, ,
\end{equation}
with the coefficients given in Appendix~\ref{appendixC}. Note that the polar perturbations $\delta \theta$ are decoupled from the equatorial perturbations $\delta r$ and $\delta \varphi$. This is consistent with the results obtained in \cite{CM93, Martin94} for the case $M=a=0$ (Minkowski geometry).

Since the string loop is closed, we can write the perturbations as  Fourier series:
\begin{equation}\label{fourier1}
\delta r(\tau, \lambda)=\sum_{j=-\infty}^{+\infty}c_j(\tau) e^{i j k \lambda} ,
\end{equation}
\begin{equation}\label{fourier2}
\delta \varphi(\tau, \lambda)=\sum_{j=-\infty}^{+\infty}d_j(\tau) e^{i j k \lambda} ,
\end{equation}
\begin{equation}\label{fourier3}
\delta \theta(\tau, \lambda)=\sum_{j=-\infty}^{+\infty}e_j(\tau) e^{i j k \lambda} .
\end{equation}
Substituting \eqref{fourier1} and \eqref{fourier2} in equations \eqref{motion_linear1} and \eqref{motion_linear2}, we obtain for the equatorial perturbations
\begin{equation}
\frac{d}{d \tau} \left( \begin{array}{c}
c_j \\
\dot{c}_j  \\
d_j \\
\dot{d}_j \end{array} \right)=\textbf{M}_j \left( \begin{array}{c}
c_j \\
\dot{c}_j  \\
d_j \\
\dot{d}_j \end{array} \right) ,
\end{equation}
with 
\begin{equation}
\textbf{M}_j=\left( \begin{array}{c c c c}
0 & 1 & 0 & 0 \\
-\frac{F}{L}+\frac{H}{L} j^2 k^2 & -i \frac{J}{L} j k & -i \frac{G}{L} j k & -\frac{I}{L} \\
0 & 0 & 0 & 1 \\
-i \frac{A}{E} j k & -\frac{C}{E} & \frac{B}{E} j^2 k^2 & -i \frac{D}{E} j k \end{array} \right).
\end{equation}
Looking for solutions proportional to $e^{i \omega \tau}$, we are led to the characteristic polynomial
\begin{empheq}[box=\widefbox]{align} 
\begin{split} \label{polynomialequ}
\tilde{p}_j(\omega)&=\text{det}(\textbf{M}_j-i \omega \textbf{I}) \\
&= \frac{1}{E L}\left[ k^2 j^2 ( A G -B F + B H k^2 j^2) + 
k j \left( C G + A I -D F + (D H + B J) k^2 j^2 \right) \omega  \right. \\ &\left. +\left(  C I -E F + 
k^2 (E H + D J + B L) j^2 \right) \omega^2  + 
k (E J + D L) j \, \omega^3 + E L\, \omega^4\right] ,
\end{split}
\end{empheq}
where $\textbf{I}$ represents the $4\times4$ identity matrix. Note that $\tilde{p}_j(\omega)$ is a quartic function of $\omega$ with real coefficients, and so the complex conjugate of a root of $\tilde{p}_j$ is also a root.

For polar perturbations, substituting \eqref{fourier3} in equation \eqref{motion_linear3}, we obtain
\begin{equation}
L\, \ddot{e}_j+i j k J\, \dot{e}_j+\left(N - j^2 k^2 H\right) e_j= 0  .
\end{equation}
Again looking for solutions proportional to $e^{i \omega \tau}$, we obtain the characteristic polynomial
\begin{equation} \label{polynomialpol}
\boxed{\tilde{q}_j(\omega)=L\, \omega^2+j k J\, \omega -\left( N -j^2 k^2 H \right)} \,\, .
\end{equation}
Once more, since $\tilde{q}_j(\omega)$ is a quadratic function of $\omega$ with real coefficients, the complex conjugate of each of its roots is also a root. 

By the theory of linear ODEs, a necessary condition for the embedding \eqref{embed} to be linearly stable under equatorial (polar) perturbations is that, for each mode $j \in \bbZ$, all the roots of $\tilde{p}_j$ ($\tilde{q}_j$) have non-negative imaginary parts. Since these roots come in conjugate pairs, this necessary condition amounts to requiring all roots to be real. Under such an assumption, linear stability is equivalent to the absence of secular terms, that is, to all roots having the same algebraic and geometric multiplicities (which is automatically true for simple roots).

\subsubsection{Minkowski spacetime}

In the Minkowski case $M=a=0$, the roots of $\tilde{p}_j$ are the solutions of 
\begin{align} \label{pMink}
\begin{split}
	&\left(\omega -j \frac{s}{R}  \right) \bigg[ R^3 (1+s^2)(1-c^2 s^2)\, \omega^3 +j R^2 s\, (s^2-1) \big[c^2 (2+3 s^2)-1 \big]\, \omega^2 \\ &+ R \Big((s^2-3) s^2 -\big[1-3 s^2 +j^2(1-s^2)^2(1+3 s^2)\big]c^2 \Big)\, \omega - j s(1-s^2) \Big[c^2\Big(j^2 (1-s^2)^2 -3 \Big)-s^2 \Big] \bigg]=0 .
	\end{split} 
\end{align}  
We can immediately see that 
\begin{align}
	\omega=j \frac{s}{R}=j\, \Omega
\end{align}
are real roots of $\tilde{p}_j$. These roots correspond to a special kind of solutions, namely transverse waves travelling along the string loop in the opposite direction to its rotation \cite{CM93, Martin94}.\footnote{These solutions correspond to a fixed (in time) equatorial deformation of the rotating string loop. They exist because in Minkowski spacetime $v=s$, and so we can have transverse (equatorial) perturbations moving with respect to the string loop with exactly minus the velocity of the rotating string loop. Explicitly, if $\omega=j\, \Omega$ for each mode $j\in\bbZ$ we have
$$\delta r(\tau, \lambda)=\sum_{j=-\infty}^{+\infty}c_j e^{i j \Omega \tau} e^{i j k \lambda}=\sum_{j=-\infty}^{+\infty}c_j e^{i j(\Omega \tau+k \lambda )}= \sum_{j=-\infty}^{+\infty}c_j e^{i j \varphi}=\delta r(\varphi) ,$$ 
and similarly for $\delta \varphi$. We can apply the same argument to predict the existence of this kind of solutions for polar perturbations, and, in fact, it is simple to show their existence explicitly.} Now, since there is one known root of $\tilde{p}_j$ for each mode $j\in\bbZ$, we only need to analyse the roots of the remaining cubic polynomial in \eqref{pMink}.
To analyse the roots of this polynomial, we make use of its discriminant:
\begin{align}
	\begin{split} \label{discriminantMink}
	&\Delta=4 R^6\bigg( c^2 (c^2- s^2)^2 (1-s^2)^6\, j^6 \\ &- (1-s^2)^4 \left[c^8 s^2- c^6 (3+28 s^2+ 21 s^4)-c^4 s^2 (41+72 s^2+41 s^4) -c^2 s^4(21+28 s^2+3 s^4)+ s^6 \right] \, j^4\\
	&- (1-s^2)^2  \Big[2 c^8 s^2(1+12 s^2+9 s^4) -c^6 (3+20 s^2+78 s^4+4 s^6 -57 s^8) +8 c^4 s^2 (7-13 s^4 +7 s^8) \\ &+c^2 s^4 (57-4 s^2 -78 s^4 -20 s^6-3 s^8) +2 s^6(9+12 s^2+s^4)   \Big] j^2 \\&- (1+s^2) (1- s^2 c^2)\Big[s^4-3 s^2 (1-c^2)-c^2  \Big]^3 \bigg) .
	\end{split} 
\end{align}
If $\Delta>0$ for a given mode $j\in\bbZ$ then the cubic polynomial has three distinct real roots; if $\Delta=0$ then the polynomial has a multiple root (and all the roots are real); and if $\Delta<0$ then the polynomial has one real root and a pair of non-real (conjugate) roots. Therefore, by the discussion of the previous section, if $\Delta< 0$ for some mode $j\in\bbZ$ then the equilibrium is linearly unstable under equatorial perturbations. On the other hand, if $\Delta>0$ for every mode $j\in\bbZ$, with all the roots different from $j (s/R)$, then the equilibrium is linearly stable under equatorial perturbations.
 
 Setting 
 \begin{equation}
 	y=j^2\ge 0 ,
 \end{equation}
 we see that $\Delta$ is a cubic function of $y$ for $s\neq c$, with the coefficients being functions of $s$ and $c$. In particular, the coefficient of $y^3$ is positive. Therefore we can rewrite the expression of the discriminant as 
 
 \begin{equation}
 	\Delta(y)=\widetilde{A}y^3+\widetilde{B}y^2+ \widetilde{C} y+ \widetilde{D}  ,
 \end{equation}
 with $\widetilde{A}>0$, $\widetilde{B}$, $\widetilde{C}$ and $\widetilde{D}$ the obvious functions of $s$ and $c$ in \eqref{discriminantMink}.
 If the inequality
 \begin{equation} \label{ineq}
 	\widetilde{B}^2-3 \widetilde{A}\widetilde{C}\leq 0
 \end{equation}
 holds, then the global minimum of $\Delta(y)$ for $y\geq0$ is $\Delta(0)=\widetilde{D}$. On the other hand, if the inequality \eqref{ineq} does not hold, then there exist $y_1$ and $y_2$, with $y_1<y_2$, such that $\Delta$ has a local maximum at $y=y_1$ and a local minimum at $y=y_2$. In this case, if
  \begin{equation}
 \widetilde{C}\leq0 ,
 \end{equation}
 then the global minimum of $\Delta(y)$ for $y\geq0$ is $\Delta(y_2)$. On the other hand, if
 \begin{equation}
 	\widetilde{C}>0 \quad \text{ and } \quad \widetilde{B}\geq 0 ,
 \end{equation}
then the global minimum of $\Delta(y)$ for non-negative $y$ is $\Delta(0)=\widetilde{D}$. If instead
\begin{equation}
 	\widetilde{C}>0 \quad \text{ and } \quad \widetilde{B}< 0 ,
\end{equation}
then the global minimum of $\Delta(y)$ for non-negative $y$ is $\min \left\{\widetilde{D}, \Delta(y_2) \right\}$.

To summarize, for given $s$ and $c$ we know the global minimum of the discriminant $\Delta(y)$ for $y \geq 0$. If this global minimum is positive then the equilibrium is linearly stable under equatorial perturbations. Therefore, as shown in Figure~\ref{stabilitymink}, the equilibrium is linearly stable for $s<c$. If $s=c$, it is easy to see that $\widetilde{A}=0$, $\widetilde{B}>0$ and $\widetilde{C}<0$. Therefore, there exists $y_3$ such that $\Delta(y)$ has a local minimum at $y=y_3>0$, and so the global minimum of $\Delta(y)$ for non-negative $y$ is $\Delta(y_3)$. It is easy to show that $\Delta(y_3)=0$, and so the equilibrium is also linearly stable under equatorial perturbations in this case.\footnote{In particular, the ``warm'' cosmic string model in Section~\ref{warm} yields linearly stable rotating loops.}

\begin{figure}[h] 
	\centering 	
	\includegraphics[scale=0.5]{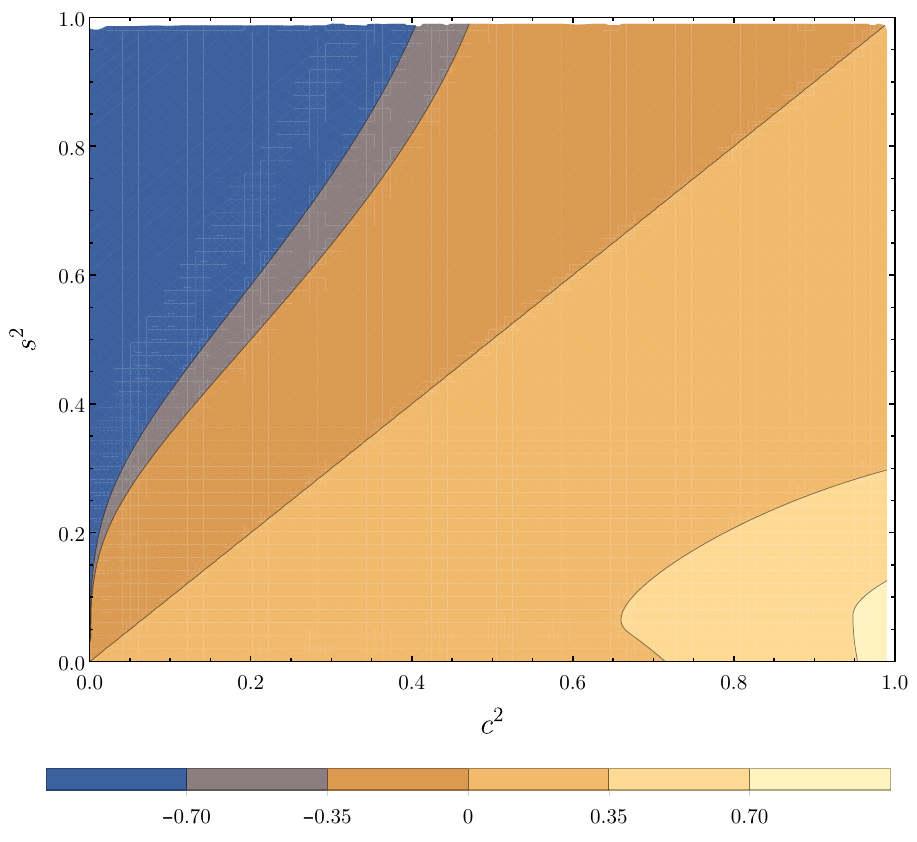} 
	\caption{Contour plot of the global minimum of $\text{tanh}\left[\Delta/(4 R^6) \right]$ as a function of $c^2$ and $s^2$.}
	\label{stabilitymink}
\end{figure}

If the global minimum is negative, however, we cannot conclude immediately that the equilibrium is linearly unstable under equatorial perturbations, because this minimum need not occur near a perfect square $y=j^2$ (with $j \in \bbZ$). Instead, we have to look at the values of the discriminant on the two perfect squares around the point of global minimum, and check whether the smallest of these values is negative. This is shown in Figure~\ref{stabilitymink2}, where we see that indeed the region of stability persists beyond $s=c$, the first instabilities ocurring approximately for 
\begin{equation}
s^2=c^2 + \frac{12 c^2 (1-c^2)}{15 c^2+4}.
\end{equation}

\begin{figure}[h] 
	\centering 	
	\includegraphics[scale=0.5]{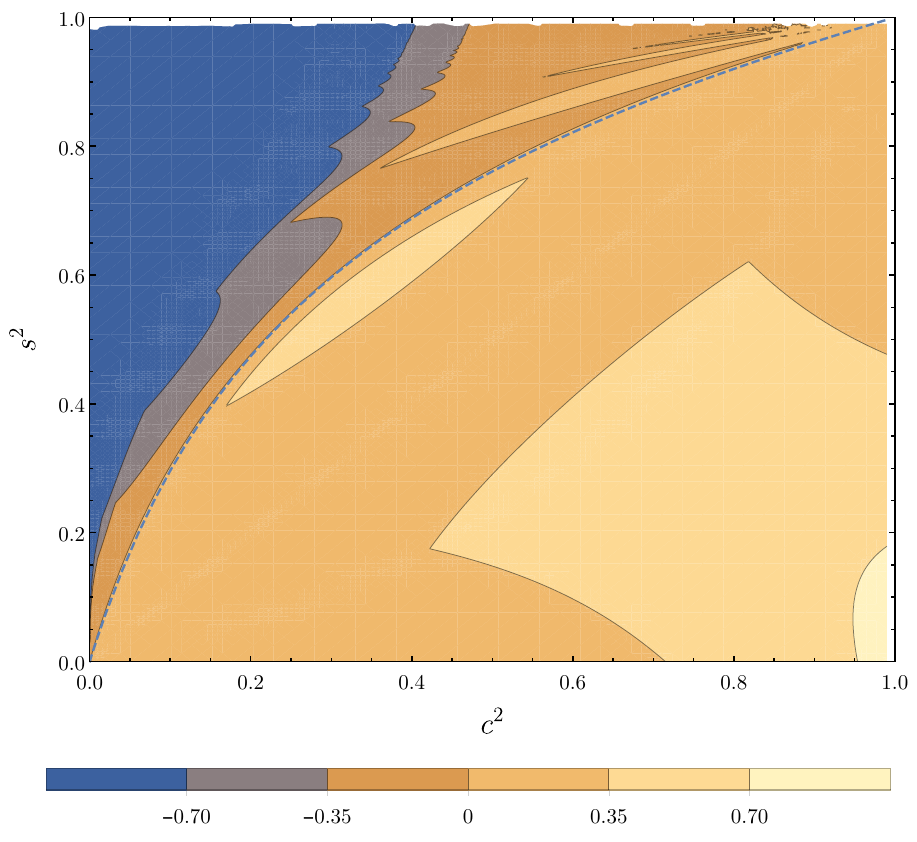} 
	\caption{Contour plot of the minimum of $\text{tanh}\left[\Delta/(4 R^6) \right]$ at a perfect square as a function of $c^2$ and $s^2$. The plot of $s^2=c^2 + 12 c^2 (1-c^2) / (15 c^2+4)$ is shown as dashed line. }
	\label{stabilitymink2}
\end{figure}

To be completely rigorous in establishing linear stability, we must check that the distinct real roots of the cubic polynomial in \eqref{pMink} do not coincide with any of the special roots $\omega=j \frac{s}{R}=j\, \Omega$. Evaluating the  polynomial at these roots yields
\begin{equation}
s (s^2-c^2) j(j^2-1),
\end{equation}
and so the roots are simple unless $s=0$, $s=c$, $j=0$ or $j=\pm 1$. We will not analyse the case $s=0$ in detail because it is unstable under polar perturbations, as we will see. In the case $s=c$, with $s>0$, one can check that the geometric multiplicity of the roots is two, and so we have stability. The geometric multiplicity of the roots with $j=0$, however, is one, and so we have secular terms; but these do not signal any instability, because they connect to infinitesimally close equilibria with slightly different radius, angular velocity and initial orientation of the string loop (that is, $\delta r = \alpha$ and $\delta \varphi =\beta t + \gamma$ for appropriate constants $\alpha$, $\beta$ and $\gamma$). Finally, the roots with $j=\pm 1$ also have geometric multiplicity one, and again lead to secular terms; these connect to equilibrium solutions obtained by slightly boosting the equilibrium on the equatorial plane, and are an unavoidable consequence of the Lorentz invariance of Minkowski space.

On the other hand, the polynomial $\tilde{q}_j$ is given in the Minkowski case $M=a=0$ by
\begin{equation}
\tilde{q}_j(\omega)=(1+s^2)\, \omega^2- \frac{2js^3}{R}\, \omega - \frac{j^2s^2(1-s^2)}{R^2} ,
\end{equation}
with discriminant
\begin{equation}
	\Delta_{\tilde{q}_j}=4 j^2 \frac{s^2}{R^2} .
\end{equation}
Since this discriminant is always nonnegative, the polynomial has real roots, which are simple unless $s=0$ or $j=0$.\footnote{One of these roots is, of course, $\omega=j\frac{s}{R}$.} If $s=0$ there are secular terms, and the equilibrium is unstable, as promised. When $j=0$ there are also secular terms, but these connect to equilibrium solutions obtained by slightly boosting the equilibrium in the $z$-direction, and are again required by Lorentz invariance. Thus, the equilibrium is linearly stable under polar perturbations for $s>0$.

Noticing that $s=v$ in Minkowski's spacetime, we have proved the following result:
\begin{Thm}
In Minkowski's spacetime, the equilibrium of a rotating string loop which satisfies the weak energy condition and has well defined longitudinal and transverse speeds of sound $c \leq 1$ and $s \leq 1$ is linearly stable if its rotation speed $v=s$ satisfies $0 < v\leq s(c)$, where the function $s(c)$ is approximately given by $s^2(c) = c^2 + 12 c^2 (1-c^2)/(15 c^2+4)$, and linearly unstable if $v=0$ or for certain values of $v>s(c)$. Moreover, the equilibrium is linearly stable under polar perturbations for any $v>0$. 
\end{Thm}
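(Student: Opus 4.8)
The plan is to assemble the theorem from the modal analysis of the linearized equations \eqref{motion_linear1}--\eqref{motion_linear3}, exploiting the decoupling of the equatorial perturbations $(\delta r,\delta\varphi)$ from the polar perturbation $\delta\theta$ together with the identity $v=s$ valid in Minkowski's spacetime (this follows from \eqref{v^2} and the formula for the transverse speed of sound). By the discussion following \eqref{polynomialpol}, linear stability is equivalent to requiring, for every mode $j\in\bbZ$, that all roots of the quartic $\tilde p_j$ and of the quadratic $\tilde q_j$ be real and that any repeated root have equal algebraic and geometric multiplicities --- or, failing the latter, that the resulting secular terms be spurious, in the sense of corresponding to displacement along a family of nearby equilibria or along the orbit of a spacetime symmetry. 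The proof then breaks into three tasks: the polar sector, the reality of the roots in the equatorial sector, and the multiplicity bookkeeping in the equatorial sector.

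First I would dispose of the polar sector. The quadratic $\tilde q_j$ written out in the Minkowski case has discriminant $4j^2s^2/R^2\ge 0$, so its roots are always real; they are simple, hence harmless, unless $s=0$ or $j=0$. For $j=0$ the double root at $\omega=0$ produces secular terms, but these merely implement an infinitesimal boost in the $z$-direction and are forced by Lorentz invariance; for $s=0$ the double root at $\omega=0$ is a genuine growing mode. This simultaneously establishes the ``moreover'' clause (stability under polar perturbations for every $v=s>0$) and the instability at $v=0$.

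Next, the reality of the roots in the equatorial sector. From \eqref{pMink}, $\tilde p_j$ factors as $(\omega-js/R)$ times a cubic, and the linear factor already contributes the real root $\omega=js/R=j\Omega$; reality of all roots of $\tilde p_j$ thus reduces to nonnegativity of the discriminant $\Delta$ of \eqref{discriminantMink} at every perfect square $y=j^2$. Regarding $\Delta$ as a cubic in $y$ with positive leading coefficient (for $s\ne c$), one locates its global minimum over $y\ge 0$ by the elementary sign analysis of $\widetilde{B},\widetilde{C},\widetilde{D}$ carried out above; the upshot, recorded in Figure~\ref{stabilitymink}, is that this minimum is positive for $s<c$, while for $s=c$ one has $\widetilde{A}=0$ and the minimum equals $\Delta(y_3)=0$. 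Hence all roots of every $\tilde p_j$ are real whenever $s\le c$. When $s$ exceeds $c$ the minimum of $\Delta(y)$ turns negative, but since it need not be attained near an integer it does not by itself force instability; one must instead compare the values of $\Delta$ at the two integer squares bracketing the minimizer, which yields the (approximate) threshold $s^2(c)=c^2+12c^2(1-c^2)/(15c^2+4)$ of Figure~\ref{stabilitymink2}. Beyond this threshold some $\tilde p_j$ acquires a non-real root, giving linear instability under equatorial perturbations for the corresponding values of $v$.

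Finally I would settle the multiplicities in the equatorial sector for $0<s=v\le s(c)$. Evaluating the cubic factor of \eqref{pMink} at $\omega=js/R$ gives $s(s^2-c^2)\,j(j^2-1)$, so the four roots of $\tilde p_j$ are distinct --- hence simple --- except when $j=0$, $j=\pm1$, or $s=c$ (the case $s=0$ already being excluded). For $s=c$ with $s>0$ the repeated root is found to have geometric multiplicity two, so no secular term appears; for $j=0$ and $j=\pm1$ the repeated roots have geometric multiplicity one, but the associated secular terms connect the equilibrium to infinitesimally nearby equilibria --- a loop of slightly different radius, angular velocity and orientation in the first case, an equatorially boosted loop (again required by Lorentz invariance) in the second --- and hence signal no instability. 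Combining the three tasks and recalling $v=s$ gives the theorem. The routine parts are the polar analysis and the multiplicity bookkeeping; the genuinely substantive step --- and, in a fully analytic treatment, the main obstacle --- is the discriminant analysis, specifically the claim that the global minimum of $\Delta(y)$ over $y\ge 0$ is nonnegative precisely on $s\le c$ together with the determination of $s(c)$, both of which here rest on the numerically generated Figures~\ref{stabilitymink} and \ref{stabilitymink2} rather than on a closed-form inequality.
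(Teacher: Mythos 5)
Your proposal is correct and follows essentially the same route as the paper: the same three-part decomposition (polar quadratic, discriminant analysis of the cubic factor of $\tilde p_j$ viewed as a cubic in $y=j^2$, and the multiplicity bookkeeping at the special roots $\omega=j s/R$ with the secular terms attributed to nearby equilibria and Lorentz boosts). You also correctly identify that the determination of the threshold $s(c)$ rests on the numerically generated contour plots rather than a closed-form inequality, exactly as in the paper.
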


These results are consistent with those obtained in \cite{CM93, Martin94}, where a different parameterization of the worldsheet, using $\varphi$ instead of $\lambda$, was employed. In particular, the instability region for the first 400 modes plotted in~\cite{Martin94} is a good approximation of the full instability region, depicted in Figure~\ref{stabilitymink2}, although around the point $(c^2,s^2)=(0,1)$ the instability occurs for modes with $j>400$.
\subsubsection{Schwarzschild spacetime}
In the Schwarzschild spacetime, the speed of the rotating string loop is given by
\begin{equation}
	v^2=\frac{(R-2 M) s^2+M}{(R-2 M)+M s^2} .
\end{equation}
Thus, we see that $v^2<1$ implies $R>3 M$. This is related to the fact that all circular orbits by massive particles must have a larger radius than the radius $3 M$ of the photon sphere: we must have $R>3M$ for an incoherent loop (dust ring), and any elastic force will just add to the gravitational force, since we are assuming that the transverse speed of sound is well defined (hence $p\leq 0$).

Generating contour plots of the discriminant of $\tilde{p}_1$ as a function of $c^2$ and $s^2$, for different values of $R$, we can see that this discriminant is always negative. In Figs.~\ref{fig:discriminantschw4}, \ref{fig:discriminantschw7} and \ref{fig:discriminantschw11} we present some of these plots. Since the discriminant of $\tilde{p}_1$ is negative, the polynomial has two complex conjugate non-real roots. Thus, we see that the equilibrium is linearly unstable under equatorial perturbations.\footnote{This is related to the well known Ringworld instability, discussed in~\cite{Breakwell81, McInnes03}.}


\begin{figure}[h!] 
	\centering 	
	\includegraphics[scale=0.43]{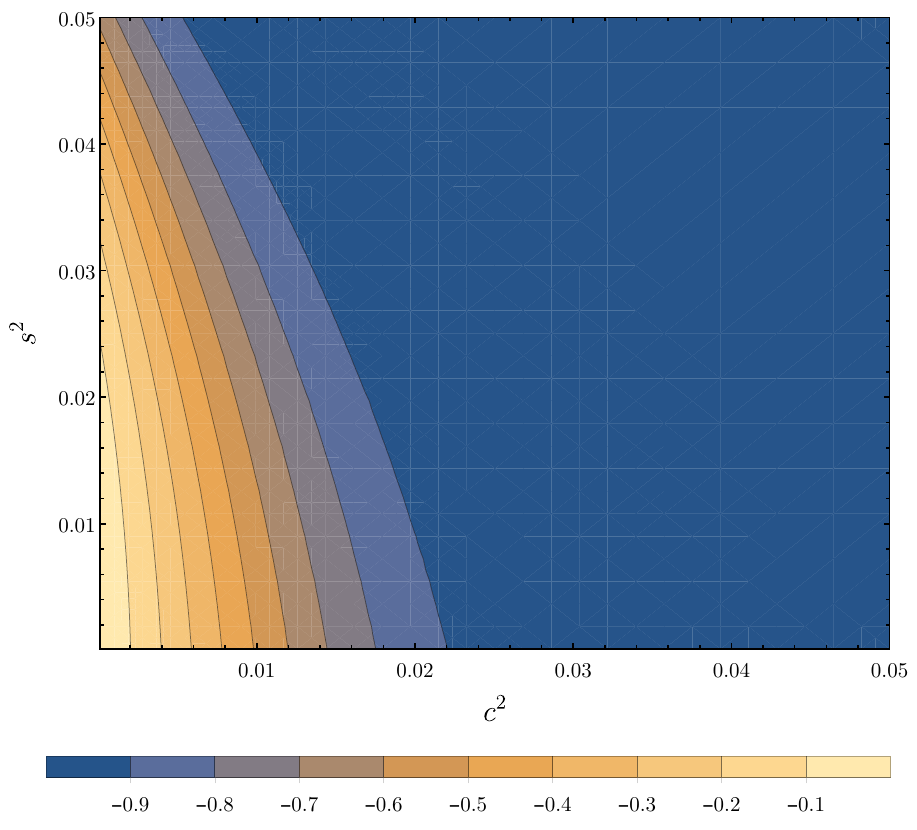} 
	\caption{Contour plot of $\text{tanh}\left(\Delta_{\tilde{p}_1}\right)$ as a function of $c^2$ and $s^2$ with $R=4 M$.}
	\label{fig:discriminantschw4}
\end{figure}

\begin{figure}[h!]
	\centering 	
	\includegraphics[scale=0.43]{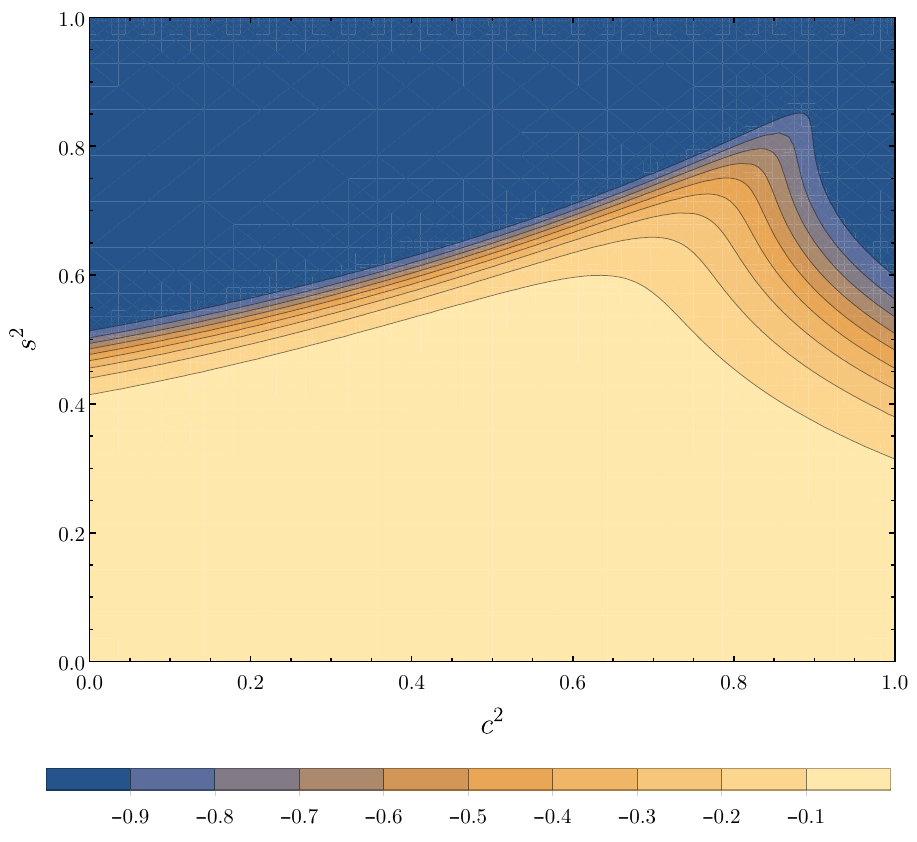} 
	\caption{Contour plot of $\text{tanh}\left(\Delta_{\tilde{p}_1}\right)$ as a function of $c^2$ and $s^2$ with $R=7 M$.}
	 \label{fig:discriminantschw7}
\end{figure}

\begin{figure}[h!] 
	\centering 	
	\includegraphics[scale=0.43]{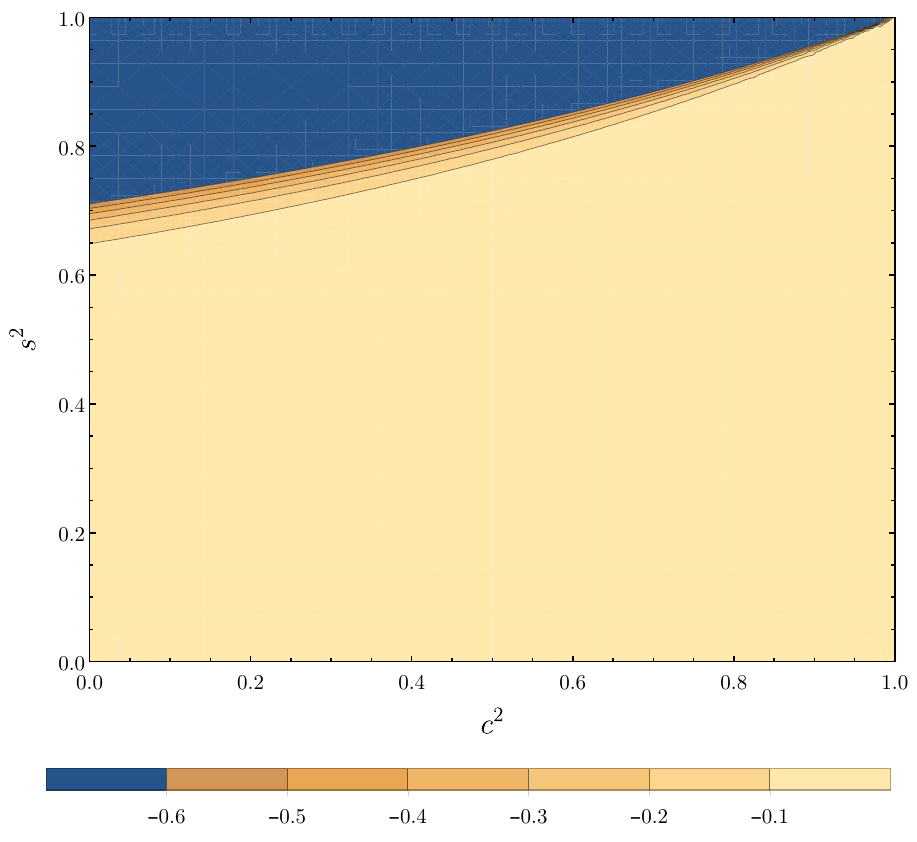} 
	\caption{Contour plot of $\text{tanh}\left(\Delta_{\tilde{p}_1}\right)$ as a function of $c^2$ and $s^2$ with $R=11 M$.}
	\label{fig:discriminantschw11}
\end{figure}


On the other hand, the discriminant of the polynomial $\tilde{q}$ is
\begin{equation}
	\Delta_{\tilde{q}_j}=\frac{4}{R} \left[M \left(\frac{1+s^2}{R-3 M}\right)^2 +\frac{s^2}{R-2 M} j^2 \right].
\end{equation}
Since this discriminant is positive, the polynomial has two distinct real roots.
Thus the equilibrium is linearly stable under polar perturbations.  
The results obtained in this section can be summarized as follows:
\begin{Thm}
In Schwarzschild's spacetime, the equilibrium of a rotating string loop which satisfies the weak energy condition and has well defined longitudinal and transverse speeds of sound $c \leq 1$ and $s \leq 1$ 
is linearly unstable under equatorial perturbations, and linearly stable under polar perturbations. 
\end{Thm}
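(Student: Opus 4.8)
The plan is to read off both halves of the statement from the two characteristic polynomials already constructed in this subsection: the real quartic $\tilde{p}_j$ governing the equatorial perturbations $(\delta r,\delta\varphi)$, and the real quadratic $\tilde{q}_j$ governing the polar perturbations $\delta\theta$, with the coefficients $A,\dots,L,N$ of Appendix~\ref{appendixC} specialized to $a=0$. Since the coefficients of $\tilde{p}_j$ and $\tilde{q}_j$ are real, their roots occur in complex conjugate pairs; hence a given mode is linearly unstable as soon as the corresponding polynomial has a non-real root, while linear stability holds once, for every $j\in\bbZ$, all roots are real with equal algebraic and geometric multiplicities (automatic for simple roots). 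In particular, to establish instability under equatorial perturbations it is enough to exhibit one unstable mode, and I would focus on $j=1$.

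First I would dispose of the polar perturbations, where the argument is completely explicit. Inserting the Appendix~\ref{appendixC} coefficients into $\tilde{q}_j$ and computing the discriminant gives
\begin{equation}
\Delta_{\tilde{q}_j}=\frac{4}{R} \left[M \left(\frac{1+s^2}{R-3 M}\right)^2 +\frac{s^2}{R-2 M} j^2 \right].
\end{equation}
The requirement $v^2<1$ for the equilibrium rotation speed forces $R>3M$ (as for all circular timelike orbits), so both bracketed terms are well defined and nonnegative and the first is strictly positive; thus $\Delta_{\tilde{q}_j}>0$ for every $j\in\bbZ$, including $j=0$ (where, unlike in the Minkowski case, no zero mode and hence no secular term appears). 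Each $\tilde{q}_j$ therefore has two distinct real roots, and the equilibrium is linearly stable under polar perturbations.

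For the equatorial perturbations the plan is to show that the mode $j=1$ is unstable for all admissible parameters. Forming $\tilde{p}_1(\omega)=\det(\textbf{M}_1-i\omega\textbf{I})$ from \eqref{polynomialequ} with the Schwarzschild ($a=0$) coefficients yields a quartic in $\omega$ with real coefficients whose discriminant $\Delta_{\tilde{p}_1}$ is an explicit rational function of $c^2$, $s^2$ and $M/R$. A real quartic has a non-real conjugate pair of roots precisely when its discriminant is negative, so it suffices to verify $\Delta_{\tilde{p}_1}<0$ throughout the physically admissible region $0\le c^2\le 1$, $0\le s^2\le 1$, $R>3M$. I would carry this out with {\sc Mathematica}: the contour plots of $\tanh(\Delta_{\tilde{p}_1})$ against $c^2$ and $s^2$ at representative radii in Figures~\ref{fig:discriminantschw4}, \ref{fig:discriminantschw7} and \ref{fig:discriminantschw11} show a negative discriminant, and one would supplement them by examining the limits $R\to 3M^+$ and $R\to\infty$ to cover the remaining radii. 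Once $\Delta_{\tilde{p}_1}<0$ is in hand, $\tilde{p}_1$ has a root with negative imaginary part, the corresponding perturbation $e^{i\omega\tau}$ grows exponentially, and the equilibrium is linearly unstable under equatorial perturbations --- the relativistic version of the Ringworld instability.

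The hard part is the last verification. In Minkowski the quartic factored off the known root $\omega=j\Omega$, leaving a cubic whose discriminant was cubic in $j^2$ and amenable to elementary calculus; in Schwarzschild there is no such factorization, $\Delta_{\tilde{p}_1}$ is a genuinely cumbersome polynomial inequality in three parameters, and converting the numerical evidence into a closed-form proof of negativity over $[0,1]^2\times(3M,\infty)$ would be the real obstacle. I would regard the contour plots at sample radii together with the endpoint asymptotics as adequate here, exactly as in the discussion preceding the statement.
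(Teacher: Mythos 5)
Your proposal follows essentially the same route as the paper: the polar stability is read off from the explicit positive discriminant of $\tilde{q}_j$ (using $R>3M$), and the equatorial instability is established by checking numerically, via contour plots of the discriminant of $\tilde{p}_1$ over $c^2,s^2$ at various radii, that it is negative, which forces a complex-conjugate pair of non-real roots. The only slip is your claim that a real quartic has a non-real pair \emph{precisely} when its discriminant is negative (a positive discriminant is also compatible with two conjugate pairs), but since you only use the direction $\Delta<0\Rightarrow$ non-real roots, the argument is unaffected.
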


\section{Gedanken experiments with rotating string loops on a Kerr background}\label{section3}

In \cite{W74}, Wald initiated a series of gedanken experiments to test the validity of the cosmic censorship conjecture. These consisted of using test particles to try to destroy the event horizon of an extremal black hole in order to undress the singularity. Around the same time, in \cite{Penrose71}, Penrose devised a mechanism for extracting energy from Kerr black holes: since the energy can be negative in the ergoregion, we can let a test particle (of positive energy) fall in from infinity and break up into two inside the ergoregion; if one of the pieces has negative energy, the other piece must have more energy than the original particle. Provided that this more energetic particle can reach infinity, we have \emph{extracted} energy from the Kerr black hole.

The gedanken experiments described above use point particles, which are the simplest kind of matter model one can think of, and we may wonder what happens if we consider extended objects instead. In this section, we extend Wald's and Penrose's gedanken experiments to rotating string loops. On the one hand, we try to overspin an extremal Kerr black hole by letting it absorb a string loop. On the other hand, we try to increase the efficiency of the Penrose process by using elastic string loops.  

The motion of Nambu-Goto and cosmic strings in black hole spacetimes has previously been studied in \cite{Larsen92, Larsen94, FL99, DVF99, Suzuki08, JS09b, Suzuki10, II10, II10a}. Other mechanisms for extracting energy from Kerr black holes using strings were considered in \cite{KP00, KIT16}.

\subsection{Elastic string loops cannot destroy extremal black holes}
We want to try to spin up a Kerr black hole past extremality through the absorption of rotating elastic string loops described by an arbitrary elastic law. We will consider the extremal Kerr solution by setting $a=M$, and use, instead of the usual Boyer-Lindquist form of the metric \eqref{K_metric}, Painlev\'{e}-Gullstrand coordinates \cite{Natario09}, so that the metric for the equatorial plane $\theta=\frac{\pi}{2}$ reads
\begin{equation} \label{Kerr_metric}
ds^2=-dt^2+\frac{r ^2}{\Sigma}(dr-vdt)^2+ \Sigma(d\varphi - \omega dt)^2\,,
\end{equation}
where
\begin{equation}
\Sigma = r^2 + a^2 + \frac{2Ma^2}{r} \; ;
\end{equation}
\begin{equation}
\omega = \frac{2Ma}{r \Sigma} \; ;
\end{equation}
\begin{equation}
v = -\frac{\sqrt{2Mr(r^2+a^2)}}{r^2}\,.
\end{equation}
Note that $\omega$ and $v$ are just the familiar expressions for the angular velocity and radial proper velocity of a zero angular momentum observer dropped from infinity. In our analysis, we will need to compute the energy and the angular momentum of the string loop at the horizon; this is why this coordinate system, which is well defined at the horizon, is appropriate. Moreover, we will choose a different embedding from the the previous sections, since we no longer want to be restricted to equilibrium configurations:
\begin{equation} \label{embed_Kerr}
\begin{cases}
t(\tau,\lambda)=\tau \\
r(\tau,\lambda) = R(\tau) \\
\varphi(\tau,\lambda) = \Phi(\tau) +  k \lambda
\end{cases}\,,
\end{equation}
where $k$ and $\lambda$ are defined as before. This embedding represents a string loop which may move freely along the radial and angular directions but still preserves its axial symmetry. Substituting~\eqref{embed_Kerr} in \eqref{Kerr_metric} yields
\begin{equation}
(h_{AB}) = 
\left(
\begin{matrix}
-1 + \frac{R ^2}{\Sigma}{v_{R}}^2+\Sigma {v_{\Phi}}^2 & k \Sigma v_{\Phi} \\
k \Sigma v_{\Phi} & \Sigma k^2 
\end{matrix}
\right),
\end{equation}
where
\begin{equation} 
\begin{cases}
v_{R}=\dot{R}-v \\
v_{\Phi}=\dot{\Phi} - \omega \\
\end{cases} ,
\end{equation}
and so
\begin{equation} 
h=-k^2(\Sigma - R ^2 {v_{R}}^2)\, .
\end{equation}
The Kerr metric admits two Killing vector fields, $\frac{\partial}{\partial t}$ and $\frac{\partial}{\partial \varphi}$, whose associated covector fields are given on the worldsheet by
\begin{equation}\label{covectors}
\begin{cases}
\xi ^{t}=g\left(\frac{\partial}{\partial t}, \cdot\right)=\left(-1-\frac{R ^2}{\Sigma}vv_{R}-\Sigma \omega v_{\Phi}\right)d\tau -\left(\Sigma \omega k\right) d\lambda \\
\xi ^{\varphi}=g\left(\frac{\partial}{\partial \varphi}, \cdot\right)=\left(\Sigma v_{\Phi}\right)d\tau + \left(\Sigma k\right) d\lambda \\
\end{cases}.
\end{equation}
We are now able to compute the energy and the angular momentum of a string loop using \eqref{conserved_quantity} and \eqref{covectors}. We obtain
\begin{equation}\label{E_L}
\begin{cases}
E=-\frac{2 \pi \sqrt{-h}}{k}\left[\left(\left(\Sigma + R^2 v v_{R}\right)\left(\frac{1}{\Sigma h_{00}}-\frac{k^2}{h}\right)+\frac{\Sigma  \omega v_{\Phi}}{h_{00}} \right)p+\left( \frac{1}{h_{00}}+\frac{R^2 v v_{R}}{\Sigma h_{00}} + \frac{\Sigma \omega v_{\Phi}}{h_{00}}\right)\rho \right] \\
L=-\frac{2\pi\sqrt{-h}\Sigma v_{\Phi}}{kh_{00}}(p+\rho)
\\
\end{cases}.
\end{equation}
Now, assume that an elastic string loop with infinitesimal energy $E$ and angular momentum $L$ enters an extremal black hole. Ignoring the backreaction of the string loop on the spacetime, we can easily find a relationship between $E$ and $L$ that must be satisfied in order for the black hole to be destroyed: if $a_f$ and $M_f$ are the parameters of the black hole after absorbing the string loop then
\begin{equation}
\frac{a_f}{M_f}=\frac{Ma+L}{(M+E)^2} \approx \frac{a}{M}+\frac{E}{M}\left(\frac{L}{EM}-\frac{2a}{M}\right)\, .
\end{equation}
At extremality $a=M$, and the black hole is destroyed if $a_f>M_f$. Hence, a string loop capable of destroying the black hole must satisfy
\begin{equation} \label{destroy}
\frac{L}{EM}>2\, .
\end{equation}
From \eqref{E_L}, we can write the energy of the string loop in the following way:
\begin{equation}
E=-\frac{2 \pi \sqrt{-h}}{k}\cdot \frac{\left(R^2vv_{R} +\Sigma \right)\left(2F'h+(Fh-2F'h_{00})k^2\Sigma \right)}{h^2 \Sigma}+ \omega L \,.
\end{equation}
In order to simplify our expression, we choose units in which the mass of the black hole is $M=1$. Note that with this choice \eqref{destroy} reduces to
\begin{equation} \label{destroy_unit}
\boxed{\frac{L}{2}>E}\,\, .
\end{equation}
Since we want to restrict our attention to the string loops that actually traverse the horizon, we evaluate the energy at the horizon, which then takes the following form:
\begin{equation}\label{energy_1}
E= F'\frac{2+8\left(-1+\dot{\Phi}\right){\dot{\Phi}}}{k^2\sqrt{-\dot{R}}\left(4+\dot{R}\right)^{\frac{3}{2}}} + F\frac{\sqrt{-\dot{R}}}{\sqrt{4+\dot{R}}}+\frac{L}{2}\,.
\end{equation}
Note that we can do this because our coordinate system is well defined at the horizon. 
All the terms appearing in expression for the energy are implicitly being evaluated at $R=M=1$.
By \eqref{E_L}, we can further see that $\dot{\Phi}$ is given by
\begin{equation}
\dot{\Phi}=\omega+\frac{\sqrt{-h}kL}{4\pi\Sigma F'}\, ,
\end{equation}
and replacing in \eqref{energy_1} yields
\begin{equation} \label{energy_final}
\boxed{E=-\frac{\left(\frac{L^2k^2}{8\pi F'}+4\pi F\right)}{\sqrt{-\dot{R}(4+\dot{R})}}\dot{R}+\frac{L}{2}}\,\, .
\end{equation}
The condition that the worldsheet is timelike requires  
\begin{equation}
-1+\frac{R ^2}{\Sigma}(\dot{R}-v)^2+ \Sigma(\dot{\Phi} - \omega)^2<0\, ,
\end{equation}
and consequently, at the horizon,
\begin{equation}\label{restriction}
\boxed{-4<\dot{R}<0}\,.
\end{equation}
This restriction guarantees that \eqref{energy_final} is well defined. From \eqref{destroy_unit}, \eqref{energy_final} and \eqref{restriction}, in order to destroy the black hole we must have
\begin{equation}
\boxed{\frac{L^2k^2}{8\pi F'}+4\pi F < 0}\,\,.
\end{equation}
This inequality cannot hold if $F' \geq 0$ and $F \geq 0$, that is, if the string obeys the weak energy condition in the worldsheet at the points where it crosses the horizon.
Interestingly, the weak energy condition in the worldsheet can be interpreted as the null energy condition in spacetime.
\begin{Thm}\label{Thm_Equivalent}
The weak energy condition in the worldsheet is equivalent to the null energy condition in spacetime.
\end{Thm}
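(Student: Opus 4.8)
\section*{Proof proposal}

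The plan is to move between the two conditions through the orthogonal decomposition of null vectors relative to the worldsheet, using exactly the adapted-coordinate setup of Section~\ref{section1}. First I would make the spacetime energy-momentum tensor precise: for the string it is the distribution supported on $\Sigma$ whose ``tangential density'' is (up to a positive factor $\sqrt{-h}/\sqrt{-g}$) the push-forward $T^{AB}\partial_A X^\mu \partial_B X^\nu$ of the worldsheet tensor \eqref{TAB}, contracted against a transverse $\delta$-distribution. Since that $\delta$-distribution is nonnegative, the null energy condition $T^{\mu\nu}k_\mu k_\nu\ge 0$ for every null covector $k$ is equivalent to the pointwise statement that, at each point of $\Sigma$ and for every null covector $k_\mu$,
\[
T^{AB}\,(\partial_A X^\mu k_\mu)(\partial_B X^\nu k_\nu)=T^{AB}k_A k_B\ge 0,
\]
where $k_A:=\partial_A X^\mu k_\mu$ is the pull-back of $k$ to the worldsheet.

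Next I would identify which worldsheet covectors occur as such pull-backs. In the adapted coordinates the induced metric on $\Sigma$ is $h_{AB}dx^A dx^B+\delta_{ij}dx^i dx^j$, so the normal bundle is spacelike; writing $k=k^{\top}+k^{\perp}$ gives $0=g(k,k)=h(k^{\top},k^{\top})+|k^{\perp}|^2$, hence $h^{AB}k_A k_B=h(k^{\top},k^{\top})=-|k^{\perp}|^2\le 0$. Thus the pull-back of a null spacetime covector is always \emph{causal} on the worldsheet. Conversely, any causal worldsheet covector $w$ is the pull-back of some null spacetime covector: simply add to (the metric dual of) $w$ a normal part of squared norm $-h^{AB}w_A w_B\ge 0$, which is possible because the normal bundle has rank $n-1\ge 1$ (in all the spacetimes considered here $n=3$). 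Therefore the spacetime NEC is equivalent to $T^{AB}w_A w_B\ge 0$ for every causal worldsheet covector $w$.

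It then remains to show this last condition is exactly the worldsheet weak energy condition. Writing \eqref{TAB} in perfect-fluid form $T^{AB}=(\rho+p)U^A U^B+p\,h^{AB}$, with $\rho=F(n^2)$ and $\rho+p=2n^2F'(n^2)$ from \eqref{rhop}, and expanding a causal covector in the comoving orthonormal frame $\{U,E_1\}$ as $w_A=\alpha\,U_A+\beta\,(E_1)_A$ (so that $h^{AB}w_A w_B=-\alpha^2+\beta^2\le 0$, i.e.\ $\beta^2\le\alpha^2$, and $U^A w_A=-\alpha$), a short computation gives
\[
T^{AB}w_A w_B=\rho\,\alpha^2+p\,\beta^2=\alpha^2\Big[(1-t)\,\rho+t\,(\rho+p)\Big],\qquad t:=\beta^2/\alpha^2\in[0,1]
\]
when $\alpha\ne 0$, and $T^{AB}w_A w_B=0$ when $\alpha=0$. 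This is nonnegative for all admissible $\alpha,\beta$ if and only if $\rho\ge 0$ and $\rho+p\ge 0$, i.e.\ $F(n^2)\ge 0$ and $F'(n^2)\ge 0$, which by \eqref{weak} is precisely the weak energy condition in the worldsheet.

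The step I expect to be the main obstacle is the first one: formulating the spacetime null energy condition for an object whose support has measure zero. Once one agrees the correct reading is pointwise nonnegativity of the smooth tangential density against null covectors, the rest is elementary linear algebra, the only genuine geometric input being that the worldsheet admits at least one spacelike normal direction (i.e.\ $n\ge 2$), which is what makes the pull-back map onto causal worldsheet covectors surjective. If one wishes to sidestep distributions entirely, the same proof works on $\Sigma$ directly by \emph{defining} the string's NEC as $T^{AB}k_A k_B\ge 0$ for every worldsheet covector $k$ that arises as the restriction of a null spacetime covector.
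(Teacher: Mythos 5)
Your proof is correct and follows essentially the same route as the paper: decompose a null spacetime (co)vector into a tangential and a normal part, use the spacelikeness of the normal bundle to conclude that the tangential part is causal, and use $T_{ij}=0$ to reduce the spacetime contraction to the worldsheet one. You are in fact slightly more thorough than the paper, since you explicitly prove the surjectivity needed for the converse direction (every causal worldsheet covector is the pull-back of some null spacetime covector, because the normal bundle has positive rank) and spell out the equivalence of the worldsheet condition with $\rho\geq 0$, $\rho+p\geq 0$; the paper leaves both steps implicit.
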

\begin{proof}
The null energy condition in spacetime states that
\begin{equation}\label{null_space}
T_{\mu \nu} k^\mu k^\nu \geq 0 
\end{equation}
for every null vector $k$ in spacetime. On the other hand, the weak energy condition in the worldsheet states that 
\begin{equation}\label{weak_string}
T_{A B} u^A u^B \geq 0
\end{equation}
for every causal vector $u$ in the worldsheet. We can decompose any null vector in spacetime into a component tangent to the worldsheet and an orthogonal component, from which we can write
\begin{equation} 
T_{\mu \nu} k^\mu k^\nu=T_{A B} k^A k^B + T_{i j} k^i k^j=T_{A B} k^A k^B \,,
\end{equation}
as $T_{ij}=0$. Furthermore, because $k$ is null,
\begin{equation}
0 = k_\mu k^\mu = k_A k^A + k_i k^i \,.
\end{equation}
Since any vector orthogonal to the worldsheet is spacelike, we have $ k_i k^i \geq 0$, and so $ k_A k^A \leq 0$, that is, the component of $k$ tangent to the worldsheet is causal. It follows that \eqref{null_space} and \eqref{weak_string} are equivalent statements.
\end{proof}
Therefore, we can summarize our result as follows:
\begin{Thm}\label{Thm_Safety}
Elastic string loops satisfying the null energy condition at the event horizon cannot destroy an extremal Kerr black hole.
\end{Thm}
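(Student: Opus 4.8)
The plan is to assemble the inequalities already derived. Recall that, neglecting backreaction, a string loop of infinitesimal energy $E$ and angular momentum $L$ can drive the extremal black hole past extremality only if $L/2 > E$ in units with $M=1$, which is condition \eqref{destroy_unit}. The energy $E$ is conserved along the motion (it is the quantity $E^\xi$ of \eqref{conserved_quantity} with $\xi = \frac{\partial}{\partial t}$), so I may evaluate it at the instant the axially symmetric loop crosses the horizon; this is precisely why the Painlev\'{e}--Gullstrand chart, which is regular there, was chosen. Using \eqref{energy_final} one then obtains
\[
\frac{L}{2} - E \;=\; \left(\frac{L^2 k^2}{8\pi F'} + 4\pi F\right)\frac{\dot R}{\sqrt{-\dot R\,(4+\dot R)}},
\]
with $F$ and $F'$ evaluated at the number density the loop has where it meets the horizon.

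Next I would feed in the requirement that the worldsheet be timelike, which at the horizon forces $-4 < \dot R < 0$ (equation \eqref{restriction}). In particular $\dot R < 0$ and $\sqrt{-\dot R\,(4+\dot R)}$ is real and strictly positive, so $L/2 - E$ has the sign opposite to that of $\frac{L^2 k^2}{8\pi F'} + 4\pi F$. Hence the loop can destroy the black hole only if
\[
\frac{L^2 k^2}{8\pi F'} + 4\pi F < 0 .
\]

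Finally, by Theorem \ref{Thm_Equivalent} the null energy condition in spacetime at the horizon is equivalent to the weak energy condition on the worldsheet there, i.e. $F \geq 0$ and $F' \geq 0$ at those points; with these signs the left-hand side above is manifestly non-negative, contradicting the strict inequality, and the theorem follows. The borderline case $F' = 0$ deserves one line: from \eqref{E_L} one has $L \propto p + \rho = 2 n^2 F'$, so $F' = 0$ forces $L = 0$ and the destruction condition degenerates to $F < 0$, which is still forbidden. I do not anticipate a genuine obstacle at this stage, since all the analytic effort lives in the derivation of \eqref{energy_final} and \eqref{restriction}; the only points requiring mild care are this edge case and the observation that the strict positivity of $\sqrt{-\dot R\,(4+\dot R)}$ is exactly what reduces the whole argument to a clean sign comparison.
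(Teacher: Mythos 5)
Your argument is correct and follows the paper's own proof essentially verbatim: combine the overspinning condition \eqref{destroy_unit} with the horizon energy formula \eqref{energy_final} and the timelike-worldsheet restriction \eqref{restriction} to reduce everything to the sign of $\tfrac{L^2k^2}{8\pi F'}+4\pi F$, then invoke Theorem \ref{Thm_Equivalent}. Your extra remark on the degenerate case $F'=0$ (where $L\propto 2n^2F'$ vanishes and the condition collapses to $F<0$) is a small but welcome addition that the paper glosses over.
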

This result is consistent with the more general statement, proved by the authors in \cite{NQV16}, that test fields satisfying the null energy condition at the event horizon cannot destroy an extremal Kerr black hole. In fact, the authors conjectured that this is true for any type of localized matter, including test particles and elastic strings. One can therefore be confident that as long as the matter satisfies the null energy condition at the horizon, extremal black holes are safe from destruction. More recently, it was shown in \cite{SW17} that the same statement holds for near-extremal black holes. On the other hand, it is possible to destroy extremal black holes with matter that does not satisfy standard energy conditions \cite{Toth15}. In fact, the null energy condition seems to be a fundamental assumption for the weak cosmic censorship conjecture and the safety of black holes. 

\subsection{Elastic Penrose Process}

In this section we revert to the Boyer-Lindquist form \eqref{K_metric} of the metric, which in the equatorial plane $\theta=\frac{\pi}{2}$ reads
\begin{equation}
ds^2 = - \left( 1 - \frac{2M}{r} \right) dt^2 + \frac{r^2}{\Delta} dr^2 + \left(r^2 + a^2 + \frac{2Ma^2}{r}\right) d\varphi^2 - \frac{4Ma}{r} dt d\varphi.
\end{equation}
This change is simply to facilitate the comparison with the standard particle Penrose process, which is usually analyzed in these coordinates. The expressions for the energy $E$ and angular momentum $L$ of a string loop in Boyer-Lindquist coordinates can be computed similarly to what was done in the previous section for Painlev\'{e}-Gullstrand coordinates:
\begin{equation}
\begin{cases}
E= \frac{2 \pi}{k}\sqrt{-h}\left[\left(1-\frac{R^4 \Sigma \dot{R}^2}{\Delta \left(4a^2M^2-2MR\Sigma+R^2\Sigma\right)}\right)^{-1}p+\frac{\left(1-\frac{2M}{R}+\frac{2aM\dot{\Phi}}{R}\right)}{h_{00}}(p+\rho)\right]\\
L=-\frac{2 \pi}{k}\frac{\sqrt{-h}}{h_{00}}\left(\Sigma \dot{\Phi}-\frac{2aM}{R}\right)(p+\rho)
\end{cases}.
\end{equation}

To study the Penrose process for point particles we we need to consider the  local conservation of the $4$-momentum in the breakup proccess. Usually this accounts for the conservation of energy, angular momentum and the radial component of momentum for motions in the equatorial plane. When working with elastic string loops, however, we don't have a $4$-momentum vector; instead, we have an energy-momentum tensor supported on the worldsheet, which will bifurcate whenever the string loop breaks up into two. As is well known, the energy-momentum tensor itself does not have to be continuous across a shock, but only its components orthogonal to the shock, as required by the divergence theorem; since in the case of a string loop breakup the shock is a one-dimensional curve, we must require that all components of the energy-momentum tensor orthogonal to this curve (summed over the various branches of the worldsheet) remain continuous.\footnote{The same reasoning can be applied to the case of the breakup of a particle, where the shock is now a single point.} In other words, assuming that the worldsheet $\Sigma_0$ breaks up into worldsheets $\Sigma_1$ and $\Sigma_2$ along the curve $\gamma$, we must require that
\begin{equation} \label{conservation_break}
\left(T^{AB} V_A \nu_B\right)_0 = \left(T^{AB} V_A \nu_B\right)_1 + \left(T^{AB} V_A \nu_B\right)_2
\end{equation}
at each point of $\gamma$, where $V$ is an arbitrary spacetime covector and $\nu$ is the past-pointing (say) normal to $\gamma$ in each branch of the worldsheet.


If we consider axially symmetric string loops, conditions \eqref{conservation_break} are invariant under rotations, and we can define an effective $4$-momentum vector for the string loop by integrating the appropriate components of the energy-momentum tensor along the string loop. Indeed, we already have a good definition for two of its components: the energy $E$ and the angular momentum $L$. Since we are restricting the motion of the string loop to the equatorial plane, there's only one component left. We define it by using \eqref{conserved_quantity} with the covector field associated to $\frac{\partial}{\partial r}$, obtaining
\begin{equation}
P_r=\frac{2 \pi}{k}\frac{\sqrt{-h}}{\Delta}R^2\dot{R}\left[\left(\frac{2M}{R}-\frac{4a^2M^2}{R^2}+\Sigma\left(R^2\dot{R}^2-1\right)\right)^{-1}\Sigma p-\frac{1}{h_{00}}(p+\rho) \right].
\end{equation}
It's easy to see that this quantity reduces to the \emph{radial momentum} of test particles by considering a ring of dust, corresponding to $p=0$. Finally, we can define the effective $4$-momentum of the string loop as\footnote{One can think of this covector as being defined on the one-dimensional submanifold of the worldsheet corresponding to the breakup; because of the axial symmetry, the conservation of this covector gives the same condition at all points.}
\begin{equation}
P=-E\, \xi +L\, \upsilon +P_r\, \zeta \, ,
\end{equation}
where  $\{\xi,\upsilon,\zeta\}$ is the dual coframe to $\{ X,Y,Z\}$, with $X=\frac{\partial}{\partial t}$, $Y=\frac{\partial}{\partial \varphi}$ and $Z=\frac{\partial}{\partial r}$.

The naive generalization of the Penrose process for elastic string loops would be to consider a scenario where some string loop, with energy $E_0$, comes from infinity, breaks up inside the ergoregion into two new string loops, with energies $E_1$ and $E_2$, one of which then escapes, carrying energy $E_2$ to infinity. A successful Penrose process requires $\eta=\frac{E_2}{E_0}>1$; in particular we want this quantity to be as large as possible in order to have an efficient extraction of energy. However, this gedanken experiment faces a considerable problem in the case of elastic string loops: the fact that string loops with a finite relaxed radius cannot in general reach infinity. Hence, there is only one physically reasonable way to extend the typical Penrose process to an \emph{elastic} Penrose process: a ring of dust comes from infinity, interacts in the ergoregion to produce an elastic string loop and another ring of dust, which will take energy away to infinity. This choice solves the problem, and the only difference from the usual Penrose process is that the negative energy particle that remains in the ergoregion is replaced by an elastic string loop. It can be physically interpreted as a ring of noninteracting particles decaying into strongly bonded particles (the elastic string loop) plus more noninteracting particles (the dust ring that escapes to infinity).

Our main goal is to understand if it is possible to achieve higher efficiencies $\eta$ through this new process. The usual Penrose process maximizes the efficiency when a massive particle from infinity breaks into two photons in the ergoregion.\footnote{There are several works on the maximal efficiency of the Penrose process with test particles \cite{W74b, BPT72}, which have been extended to the collisional Penrose process \cite{PS77, Schnittman14, BBC15, LP16}.}
 The \emph{elastic Penrose process} will only be able to achieve higher efficiencies if the (effective) $4$-momentum of the elastic string loop is spacelike at the breakup. In fact, since the incoming and outgoing string loops are dust rings, and so behave like particles, and since the only role played by the elastic string loop is in $4$-momentum conservation at the breakup, whenever the $4$-momentum of the elastic string loop is causal the process will emulate a particle process.
\begin{Thm}\label{Thm_Timelike}
Outside the event horizon, the effective $4$-momentum $P$ for elastic string loops satisfying the dominant energy condition is causal and future-pointing.
\end{Thm}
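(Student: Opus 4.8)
The plan is to raise the index on the effective covector $P$ and work with the resulting spacetime vector directly. Unwinding the definitions of $E$, $L$ and $P_r$ through \eqref{conserved_quantity} and using the axial symmetry of the embedding \eqref{embed_Kerr} to perform the $\lambda$-integration, one finds that $P$ is the metric dual of
\[
P^\mu = \frac{2\pi}{k}\,\sqrt{-h}\;T^{0B}\,\partial_B X^\mu ,
\]
a single sign check against a simple configuration (e.g.\ a loop at rest, for which $P^\mu$ must be a positive multiple of the future-directed Killing field $\partial_t$) fixing the overall sign. In these terms, ``$P$ causal'' is the statement $g_{\mu\nu}P^\mu P^\nu = \big(\tfrac{2\pi}{k}\big)^2(-h)\,h_{BC}\,T^{0B}T^{0C}\le 0$, and ``$P$ future-pointing'' will follow from the sign of $P_\mu\hat u^\mu$, where $\hat u = U^A\,\partial_A X$ is the (future-directed, unit, timelike) spacetime four-velocity of the string's particles.

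For causality, I would substitute the perfect-fluid form $T^{AB}=(\rho+p)U^AU^B+p\,h^{AB}$ from \eqref{TAB}--\eqref{rhop} into $h_{BC}T^{0B}T^{0C}$; using $h_{AB}U^AU^B=-1$ and $h_{BC}h^{0C}=\delta^0_B$, a short expansion yields
\[
h_{BC}\,T^{0B}T^{0C} = -\,(\rho^2-p^2)\,(U^0)^2 + p^2\,h^{00}.
\]
Now pin down the signs of the geometric factors. Since the string's worldlines are timelike, $h_{00}<0$ and $(U^0)^2=-1/h_{00}>0$ (equivalently $U^0=(n\sqrt{-h})^{-1}>0$ by particle number conservation). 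Outside the event horizon $\partial_\lambda X = k\,\partial_\varphi$ is spacelike, so $h_{11}>0$, and the worldsheet is timelike, so $h<0$; hence $h^{00}=h_{11}/h<0$. Finally, the dominant energy condition \eqref{dominant} is equivalent to $\rho\ge|p|\ge 0$, so $\rho^2-p^2\ge 0$. Both terms on the right-hand side above are therefore $\le 0$, whence $g_{\mu\nu}P^\mu P^\nu\le 0$: the effective four-momentum is causal.

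For the time-orientation, note that $g_{\mu\nu}(\partial_B X^\mu)(\partial_A X^\nu)U^A=h_{BA}U^A=U_B$, so $P_\mu\hat u^\mu=\frac{2\pi}{k}\sqrt{-h}\,T^{0B}U_B$; since $T^{AB}U_B=-\rho\,U^A$, this equals $-\frac{2\pi}{k}\sqrt{-h}\,\rho\,U^0=-\tfrac{2\pi\rho}{kn}\le 0$, as $\rho=F(n^2)\ge 0$ under the dominant energy condition. A nonzero causal vector whose inner product against the future-timelike vector $\hat u$ is nonpositive is future-directed; and if $\rho=0$ then \eqref{dominant} forces $p=0$, so $T^{AB}\equiv 0$ and $P=0$. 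This establishes the statement.

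The algebra above is routine, and the conclusion ultimately rests on the two elementary facts $h^{00}<0$ and $\rho^2\ge p^2$. The one point I would be most careful about --- the main obstacle --- is the bookkeeping of sign conventions: confirming that the authors' $P$, in which the energy enters with a minus sign, $P=-E\,\xi+L\,\upsilon+P_r\,\zeta$, really does raise to $P^\mu=+\frac{2\pi}{k}\sqrt{-h}\,T^{0B}\partial_B X^\mu$, so that the minus sign appearing in $P_\mu\hat u^\mu$ signals that $P$ is future- rather than past-directed. This is the step most prone to a slip, and is best settled by evaluating a single explicit equilibrium such as a static loop.
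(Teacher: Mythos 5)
Your proof is correct, but it follows a genuinely different route from the paper's. You raise the index on $P$ to identify it with the spacetime vector $Q^\mu=\frac{2\pi}{k}\sqrt{-h}\,T^{0B}\partial_B X^\mu$ (the sign bookkeeping you flag does check out: $\langle P,X\rangle=-E=\frac{2\pi}{k}\sqrt{-h}\,T^{0A}X_A=g(X,Q)$, and likewise for $Y,Z$, with the $\theta$-block of the Kerr metric decoupling in the equatorial plane so that $P^\mu=Q^\mu$), and then verify causality and time-orientation by brute force from the perfect-fluid form of $T^{AB}$, reducing everything to $h^{00}<0$, $(U^0)^2=-1/h_{00}>0$ and $\rho\geq|p|$. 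The paper instead never computes $g(P,P)$: it contracts $P$ with the family $X^{(\Omega,\epsilon)}=X+\Omega Y+\epsilon Z$, which sweeps the future causal cone outside the horizon, and uses the dominant energy condition in the invariant form $T^{AB}w_A\nu_B\leq 0$ for $w$ future causal and $\nu$ past causal, concluding that $P$ pairs nonpositively with every future causal direction. The paper's argument is shorter and works for any worldsheet energy-momentum tensor satisfying the dominant energy condition, without invoking the specific $(\rho,p,U)$ decomposition; your computation is more explicit and actually yields slightly more, namely that $g(P,P)=\left(\tfrac{2\pi}{k}\right)^2(-h)\left[-(\rho^2-p^2)(U^0)^2+p^2h^{00}\right]$ is strictly negative unless $\rho=p=0$, so $P$ is in fact \emph{timelike} future-pointing whenever the loop is nontrivial. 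Your handling of the degenerate case $g(P,\hat u)=0$ (forcing $\rho=0$, hence $p=0$ and $P=0$ under the dominant energy condition) correctly closes the one loophole in the "causal plus nonpositive inner product with a future timelike vector implies future-pointing" step.
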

\begin{proof}
From \eqref{conserved_quantity}, we have 
\begin{equation}
-E = \int_{\{\tau=\text{constant}\}} T^{AB}X_A \nu_B \sqrt{h_{11}} \, d\lambda  \, ,
\end{equation} 
\begin{equation}
L = \int_{\{\tau=\text{constant}\}} T^{AB}Y_A \nu_B \sqrt{h_{11}} \, d\lambda \, ,
\end{equation} 
\begin{equation}
P_r = \int_{\{\tau=\text{constant}\}} T^{AB}Z_A \nu_B \sqrt{h_{11}} \, d\lambda \, ,
\end{equation} 
Moreover, from the definition of $P$ 
\begin{equation}
\langle P,X \rangle = -E\, ,
\end{equation}
\begin{equation}
\langle P, Y \rangle = L\, , 
\end{equation}
\begin{equation}
\langle P, Z \rangle = P_r\, .
\end{equation}
Recall that at each point outside the event horizon there exists a quantity $\Omega_{ZAMO}$, called the {\em angular velocity of the zero angular momentum observers}, such that the vector
\begin{equation}
X + \Omega_{ZAMO} Y
\end{equation}
is timelike and future-pointing.\footnote{In fact $\Omega_{ZAMO}$ is chosen so that $X + \Omega_{ZAMO} Y$ is orthogonal to the spacelike hypersurfaces of constant $t$ in Boyer-Lindquist coordinates.}
Let us now consider the family of vectors
\begin{equation}
X^{(\Omega, \epsilon)}= X+\Omega Y + \epsilon Z\,.
\end{equation}
It is possible to choose $\Omega$ and $\epsilon$ in such a way that $X^{(\Omega, \epsilon)}$ sweeps out the whole cone of future causal directions at every point outside the event horizon. Assuming the dominant energy condition, we have
\begin{equation}
T^{AB}X^{(\Omega, \epsilon)}_A \nu_B \leq 0 \, ,
\end{equation}
because $\nu$ is timelike past-pointing, and so
\begin{equation}
\langle P, X^{(\Omega, \epsilon)} \rangle = -E + \Omega L + \epsilon P_r = \int_{\{\tau=\text{constant}\}} T^{AB}X^{(\Omega, \epsilon)}_A \nu_B \sqrt{h_{11}} \, d\lambda \leq 0 \, .
\end{equation}
Since $X^{(\Omega, \epsilon)}$ sweeps out the whole cone of future causal directions, this inequality implies that $P$ is causal future-pointing.
\end{proof}
We can summarize our result as follows:
\begin{Thm}\label{Thm_elasticprocess}
The elastic Penrose process with string loops satisfying the dominant energy condition cannot achieve higher efficiencies than the known bound for the usual Penrose process, given by $\eta_0 = \frac{\sqrt{2}+1}{2}$.
\end{Thm}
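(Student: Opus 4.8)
The plan is to reduce the \emph{elastic} Penrose process to the ordinary two-body \emph{particle} Penrose process, with Theorem~\ref{Thm_Timelike} supplying the crucial input. Write $P_0$, $P_{\rm el}$ and $P_2$ for the effective $4$-momentum covectors of the incoming dust ring, the elastic loop produced in the ergoregion, and the outgoing dust ring. The first step is to record that the matching conditions \eqref{conservation_break}, applied to the covectors $g(X,\cdot)$, $g(Y,\cdot)$, $g(Z,\cdot)$ and integrated (with weight $\sqrt{h_{11}}\,d\lambda$) over the axially symmetric breakup locus, are precisely
\begin{equation}
P_0 = P_{\rm el} + P_2 .
\end{equation}
Since $X=\frac{\partial}{\partial t}$ is Killing, $E_0=-\langle P_0,X\rangle$ and $E_2=-\langle P_2,X\rangle$ are conserved along the two dust worldlines and hence equal to the energies at infinity, so the efficiency is $\eta=E_2/E_0=\langle P_2,X\rangle/\langle P_0,X\rangle$.

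The second step is to establish that all three covectors are causal and future-pointing. A dust ring has $p=0$ and $\rho\ge 0$, so it satisfies the dominant energy condition trivially; moreover, for an axially symmetric embedding of the form \eqref{embed_Kerr} with $p=0$ one has (immediately from \eqref{TAB}, \eqref{rhop}) $T^{AB}\propto U^A U^B$, with $U$ the $\lambda$-independent (by axial symmetry) $4$-velocity of the ring, so $P_0$ and $P_2$ are of the form $\mu\, g(U,\cdot)$ with $\mu>0$ — bona fide timelike future-pointing momenta, exactly as for massive particles dropped from, or escaping to, infinity. For the elastic loop, the hypothesis of the theorem is exactly what Theorem~\ref{Thm_Timelike} requires, so $P_{\rm el}$ is causal and future-pointing at the breakup point, which lies in the ergoregion, hence outside the event horizon.

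The third step is the comparison itself. The relation $P_0=P_{\rm el}+P_2$, with $P_0$ timelike future-pointing coming in from infinity, $P_2$ causal future-pointing escaping to infinity, and $P_{\rm el}$ causal future-pointing at a point outside the horizon, is kinematically indistinguishable from the breakup of a massive particle into two particles in Kerr, the elastic loop merely playing the role of the (generically negative-energy) fragment that remains. The classical derivation of the maximal efficiency of the particle Penrose process \cite{W74b,BPT72} uses nothing about the fragments beyond these properties, and its extremal configuration is the limit in which both fragments become null and the breakup takes place on the horizon of extremal Kerr, giving $\eta_0=\frac{\sqrt{2}+1}{2}$. Hence a causal — a fortiori, not necessarily timelike — $P_{\rm el}$ can at best saturate, and never exceed, this bound, which is the assertion of the theorem.

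The main obstacle is conceptual rather than computational: one must make sure that the classical bound $\eta_0$ genuinely depends only on the future-causal character of the three momenta, together with the conditions that $P_0$ comes from and $P_2$ returns to infinity, and not on any mass-shell or geodesic constraint on the fragments (such a constraint could in any case only lower $\eta$). I would handle this by recalling the standard derivation and checking that each inequality in it expresses either the nonpositivity of the inner product of two future-causal vectors or the fact that a future-causal fragment can be present at the breakup radius; once Theorem~\ref{Thm_Timelike} guarantees that $P_{\rm el}$ is future causal, nothing further is needed.
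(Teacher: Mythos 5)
Your proposal is correct and takes essentially the same route as the paper: the paper likewise treats Theorem~\ref{Thm_elasticprocess} as a corollary of Theorem~\ref{Thm_Timelike}, arguing that since the incoming and outgoing objects are dust rings (hence effectively particles) and the elastic loop enters only through momentum conservation at the breakup, a causal future-pointing $P_{\rm el}$ forces the process to emulate a particle Penrose process and hence respect the bound $\eta_0=\frac{\sqrt{2}+1}{2}$. You merely spell out more explicitly the balance $P_0=P_{\rm el}+P_2$, the timelike future-pointing character of the dust rings' effective momenta, and the observation that the classical bound uses nothing beyond future-causality of the fragments -- details the paper leaves implicit.
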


\subsection{Energy extraction and the dominant energy condition}
Let us now consider an elastic string loop that violates the dominant energy condition but preserves the weak energy condition. It is easy to see that an elastic string loop that violates the weak energy condition can have negative energy even in Minkowski's spacetime, and so it does not make sense to discuss the Penrose process in this case, since, in principle, we do not even need an ergoregion to accumulate energy at infinity. This is also the reason why it does not make sense to consider test particles that violate the dominant energy condition: such particles would also violate the weak energy condition. 

Let us then consider a string loop given by \eqref{rigid} with $c^2>1$. This string violates the dominant energy condition but preserves the weak energy condition. Numerically, setting $M=a=1$, $\dot{R}=0$ and $R=1.01$ at the breakup, we find that we need $\dot{\Phi} \in (0.492513, 0.497512)$ for the worldsheet to be timelike, that is, in order to have
\begin{equation}
-\left(1-\frac{2M}{R}\right)+\frac{R^2}{\Delta}\dot{R}^2+\left(R^2+a^2+\frac{2Ma^2}{R}\right)\dot{\Phi}^2 -\frac{4Ma}{R}\dot{\Phi}<0\,.
\end{equation}
Moreover, we set the following ratios: $\frac{\rho_0}{m_0}=10^{-4}$,  $\frac{m_2}{m_0}=10^{-9}$, where $m_0$ and $m_2$ are the effective rest masses of the incoming and outgoing dust rings of particles, respectively, and $\rho_0$ is the parameter in \eqref{rigid}. We solve the conservation equations \eqref{conservation_break} for $E_2$, the energy of the outgoing dust, and $k$, the inverse of the elastic string loop's relaxed radius, and we force $k>0$. By varying $\dot{\Phi}$ between its allowed values, we are able to find several numerical solutions for the efficiency which are represented in Figure~\ref{elastic_penrose}. We can see that some solutions correspond to higher efficiencies than the usual Penrose process, $\eta_0 = \frac{\sqrt{2}+1}{2}$. Actually, one can check that for these solutions the momentum vector $P$ of the string is spacelike, as required by the argument in the previous section. It is also possible to confirm that the outgoing ring of dust escapes to infinity.

\begin{figure}[h] 	
\centering 	
\includegraphics[scale=0.7]{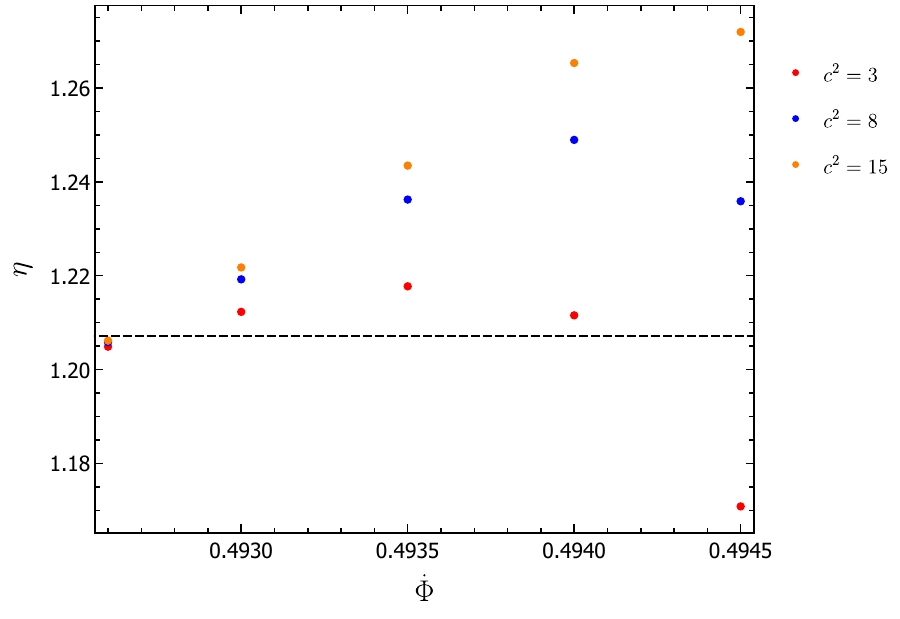} 
\caption{Plot of the efficiency of the \emph{elastic} Penrose process for strings with different values of $c^2>1$ that  violate the dominant energy condition. The dashed black line corresponds to the maximum efficiency $\eta_0 = \frac{\sqrt{2}+1}{2}$ of the standard Penrose process with test particles.} \label{elastic_penrose}
\end{figure}

Moreover, the larger we make $c^2$ the higher the efficiencies we can achieve are, and possibly this happens without bound when we let $c^2 \rightarrow +\infty$. It is hard to see if this indeed happens, because the equations become numerically unstable when $c^2$ becomes too large, but it seems a likely scenario.

\begin{Thm}\label{Thm_elasticprocess_1}
The elastic Penrose process with string loops violating the dominant energy condition (but satisfying the weak energy condition) allows efficiencies $\eta$ higher than the known bound for the usual Penrose process, given by $\eta_0 = \frac{\sqrt{2}+1}{2}$.
\end{Thm}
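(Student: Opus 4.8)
The plan is to establish the statement by exhibiting an explicit family of elastic Penrose processes whose efficiency exceeds $\eta_0$. Since the content of the theorem is an existence claim, one well-chosen example suffices, and the natural candidate is the constant-longitudinal-speed-of-sound law \eqref{rigid} with $c^2 > 1$: this law has $F, F' \geq 0$ (weak energy condition) but fails $F \geq n^2 F'$ (dominant energy condition), so it lies precisely in the regime not covered by Theorem~\ref{Thm_elasticprocess}.

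First I would set up the breakup kinematics: a dust ring of effective rest mass $m_0$ falls in from infinity along an equatorial geodesic and, inside the ergoregion, splits at a point with radial coordinate $R$ and radial velocity $\dot R$ into an elastic string loop governed by \eqref{rigid} plus an outgoing dust ring of effective rest mass $m_2$. The conservation law at the breakup curve is \eqref{conservation_break}; by axial symmetry this collapses to conservation of the three components of the effective $4$-momentum, i.e.\ of the energy $E$, the angular momentum $L$ and the radial momentum $P_r$. For the two dust rings these are the familiar particle expressions, while for the elastic loop they are the integrated expressions derived earlier in this section in terms of $F, F'$, $h$, $\dot R$, $\dot\Phi$ and $k$. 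One thus gets three scalar equations; I would fix the breakup location ($R$ slightly outside the horizon, $\dot R = 0$) and the mass ratios $\rho_0/m_0$ and $m_2/m_0$, leaving $\dot\Phi$ as a free parameter, and solve for the outgoing energy $E_2$ and the string's inverse relaxed radius $k$, imposing $k > 0$.

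Next I would restrict $\dot\Phi$ to the interval for which the string's worldsheet is timelike at the breakup, and evaluate the efficiency $\eta = E_2/E_0$ along this interval. By direct computation (carried out numerically in {\sc Mathematica} for concreteness, e.g.\ $M = a = 1$, $R = 1.01$, $\rho_0/m_0 = 10^{-4}$, $m_2/m_0 = 10^{-9}$) one finds solutions with $\eta > \eta_0 = (\sqrt{2}+1)/2$, with $\eta$ increasing as $c^2$ grows. To make the process physically meaningful I would then verify two consistency conditions: that the effective $4$-momentum $P$ of the elastic loop is spacelike at the breakup (which must hold by the contrapositive of Theorem~\ref{Thm_Timelike}, and serves as a cross-check), and that the outgoing dust ring is genuinely on an orbit reaching infinity, i.e.\ its conserved $E_2, L_2$ place it above the relevant potential barrier in equatorial Kerr.

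The main obstacle is numerical robustness rather than any conceptual difficulty: one must confirm that the region of parameter space where \emph{all} constraints hold simultaneously --- timelike worldsheet, $k>0$, $E_2>0$, outgoing ring unbound, and $\eta > \eta_0$ --- is genuinely nonempty and not an artifact of truncation error, which is delicate because the system becomes stiff as $c^2 \to \infty$. A secondary point requiring care is that the comparison must be made against the correct baseline $\eta_0 = (\sqrt{2}+1)/2$, the established maximum efficiency of the particle Penrose process, so that exceeding it is meaningful.
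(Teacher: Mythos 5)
Your proposal follows essentially the same route as the paper: it exhibits the constant-longitudinal-speed-of-sound law \eqref{rigid} with $c^2>1$ as the DEC-violating (but WEC-preserving) example, imposes the breakup conservation conditions \eqref{conservation_break} reduced to conservation of $E$, $L$ and $P_r$ at a fixed breakup point with the same parameter choices ($M=a=1$, $\dot R=0$, $R=1.01$, $\rho_0/m_0=10^{-4}$, $m_2/m_0=10^{-9}$), scans $\dot\Phi$ over the timelike-worldsheet interval with $k>0$, and verifies numerically that $\eta>\eta_0$, that $P$ is spacelike, and that the outgoing dust ring escapes to infinity --- exactly the paper's argument. The only quibble is that the spacelike character of $P$ follows from the discussion preceding Theorem~\ref{Thm_elasticprocess} (causal $P$ would make the process emulate a particle process) rather than from the contrapositive of Theorem~\ref{Thm_Timelike}, but this does not affect the proof.
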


Although the worldsheets of the string loops we just considered are timelike, perturbations travel along these strings faster than the speed of light. Hence, elastic string loops that violate energy conditions are somewhat unphysical. Nevertheless, they allow us to see explicitly that the bound on the efficiency of the Penrose process is in some way related to the dominant energy condition. This last result hints at the interesting possibility that the dominant energy condition may underlie the well known upper bounds for the efficiencies of energy extraction processes. In particular, one can wonder whether it is possible to obtain unbounded superradiance \cite{BCP15} with fields that violate the dominant energy condition, since superradiance is usually seen as the field analog of the Penrose process.

%
%
\section{Conclusion}
To conclude, we briefly summarize our results and discuss future directions of research. 

\subsection{Formalism}

In the first part of this work we re-derived the equations of motion for elastic strings starting from a Lagrangian density, and recast these equations as conservation of energy-momentum along the worldsheet plus the {\em generalized sail equations} (the vanishing of the contraction between the energy-momentum tensor and the extrinsic curvatures of the worldsheet). We also obtained the conserved quantities in spacetimes with Killing vector fields and computed the string's longitudinal and transverse speeds of sound. Although these results are known, it is convenient to have them deduced from first principles and collected together.

\subsection{Equilibria and stability}

In the second part of this work we determined the equilibrium configurations for axially symmetric string loops rotating in the equatorial plane of the Kerr spacetime. We proved that in flat spacetime the radius of a rotating string loop is always larger than its radius when relaxed, provided that its longitudinal speed of sound does not exceed the speed of light. Still in flat spacetime, we showed that equilibria are linearly stable for rotation speeds below a certain threshold, higher than the string's longitudinal speed of sound, and linearly unstable for some rotation speeds above it, confirming and extending the results in \cite{CM93, Martin94}. Finally, we found that equilibria are always linearly unstable in Schwarzschild's spacetime.

It would be interesting to extend our stability analysis to the Kerr spacetime. Although we expect similar results to the Schwarzschild case, it is conceivable that the frame dragging associated to the black hole's rotation may have a stabilization effect.

\subsection{Weak cosmic censorship and the null energy condition}

In the third part of this work we tried to destroy an extremal black hole by letting it absorb a string loop with an arbitrary elastic law. We found that this gedanken experiment can only succeed if the string violates the null energy condition at the event horizon. This result is in agreement with \cite{NQV16, SW17}, where it was shown that test fields satisfying the null energy condition at the event horizon cannot destroy extremal black holes. Since localized matter can be seen as a limit of test fields, one can argue that this conclusion should extend to general test matter. Indeed, our work shows that the same statement holds for some extended objects, bridging the gap bewteen the recent developments for test fields and the original results for test particles \cite{W74}.

\subsection{Penrose process and the dominant energy condition}

Finally, we tried to increase the efficiency of the standard Penrose process by using elastic string loops. We found that if the loops satisfy the dominant energy condition then we cannot get higher efficiencies than that of the standard Penrose process \cite{W74b, BPT72}. Nevertheless, if we consider elastic loops that violate the dominant energy condition but preserve the weak energy condition then we seem to be able to get arbitrarily high efficiencies for the Penrose process. 

Although classical matter that violates the dominant energy condition is not very interesting from the astrophysical point of view, this result may give us a better understanding on the principles underlying energy extraction from black holes. Is the dominant energy condition imposing constraints on the amount of energy that we can extract? Still along these lines, we may wonder what happens if we consider the superradiance \cite{BCP15} produced by a field that violates the dominant energy condition but preserves the weak energy condition. Will we be able to get unbounded superradiance in analogy with the \emph{elastic} Penrose process?

%
\section*{Acknowledgments}
We thank Prof.~Ant\'{o}nio Brotas for kindly providing a copy of his 1969 PhD thesis on relativistic thermodynamics and continuum mechanics (under Louis de Broglie), as well as some of his early papers. We also thank Jo\~{a}o Penedones for hypothesizing a connection between the radius of a rotating string loop and its speed of sound, and Vitor Cardoso for his suggestion of using extended objects to probe the cosmic censorship conjecture. JN was partially funded by FCT/Portugal through projects UID/MAT/04459/2013 and PTDC/MATANA/1275/2014. LQ gratefully acknowledges a scholarship from the Calouste Gulbenkian Foundation program {\em Novos Talentos em Matem\'atica}, as well as an undergraduate research grant from the FCT/Portugal project PTDC/MATANA/1275/2014. RV was supported by graduate research fellowships from the FCT/Portugal projects EXCL/MAT-GEO/0222/2012 and UID/FIS/00099/2013.
\appendix
%
%
%
\section{Non-Relativistic Limit} \label{appendixA}
It is instructive to compute the non-relativistic limit of the Lagrangian density \eqref{Lagrangian} in the case of flat Minkowski spacetime. We start by defining the embedding
\begin{equation}
\begin{cases}
t = \tau \\
x^i = x^i(\tau,\lambda)
\end{cases}.
\end{equation}
Substituting in the Minkowski metric
\begin{equation}
ds^2 = - dt^2 + \delta_{ij} dx^i dx^j
\end{equation}
we obtain
\begin{equation}
(h_{AB}) = 
\left(
\begin{matrix}
-1 + \dot{\bf x}^2 & \dot{\bf x} \cdot {\bf x}' \\
\dot{\bf x} \cdot {\bf x}' & {{\bf x}'}^2
\end{matrix}
\right),
\end{equation}
where we have set
\begin{equation}
{\bf x}(t,\lambda) = (x^1(t,\lambda), x^2(t,\lambda), x^3(t,\lambda)), \qquad \dot{\bf x} = \frac{\partial{\bf x}}{\partial t}, \qquad {\bf x}' = \frac{\partial{\bf x}}{\partial \lambda}.
\end{equation}
Therefore we have
\begin{equation}
n^2 = \frac{h_{00}}{h} = \frac{1 - \dot{\bf x}^2}{{{\bf x}'}^2(1 - \dot{\bf x}^2)+(\dot{\bf x} \cdot {\bf x}')^2}= \frac1{{{\bf x}'}^2}\left( 1 - \frac{(\dot{\bf x} \cdot {\bf t})^2}{1 - \dot{\bf x}^2 + (\dot{\bf x} \cdot {\bf t})^2}\right),
\end{equation}
where ${\bf t} = {\bf x}'/\|{\bf x}'\|$ is the unit tangent vector to the string. The Lagrangian density \eqref{Lagrangian} can then be written as
\begin{equation}\label{Lagrangian2}
\cL = F(n^2) \sqrt{-h} = F(n^2) \|{\bf x}'\| \sqrt{1 - \dot{\bf x}^2 + (\dot{\bf x} \cdot {\bf t})^2}.
\end{equation}
To obtain the nonrelativistic limit, we assume that
\begin{equation}
\|\dot{\bf x}\| \sim \varepsilon, \qquad \|{\bf x}'\| \sim 1 + \varepsilon
\end{equation}
and retain only terms up to order $\varepsilon^2$ in the expression of the Lagrangian density. Note that in particular
\begin{equation}
n^2 \sim \frac1{{{\bf x}'}^2} - (\dot{\bf x} \cdot {\bf t})^2 \sim 1 + \varepsilon,
\end{equation}
and so
\begin{equation}
F(n^2) \sim F(1) + F'(1) (n^2 - 1) + \frac12 F''(1) (n^2 - 1)^2,
\end{equation}
or, using \eqref{p(1)=0} and \eqref{c^2},
\begin{equation}
F(n^2) \sim \rho_0 \left( 1 + \frac12 (n^2 - 1) + \frac{{c_0}^2 - 1}8 (n^2 - 1)^2 \right),
\end{equation}
where $\rho_0$ and $c_0$ are the density and the speed of sound of the relaxed string. Dropping the multiplicative constant $\rho_0$, we can write the Lagrangian density \eqref{Lagrangian2} up to order $\varepsilon^2$ as
\begin{align}
\cL & \sim \frac12 \left( 1 + \frac1{{{\bf x}'}^2} - (\dot{\bf x} \cdot {\bf t})^2 + \frac{{c_0}^2 - 1}4 \left(\frac1{{{\bf x}'}^2} - 1\right)^2 \right) \|{\bf x}'\| \left( 1 - \frac12 \dot{\bf x}^2 + \frac12 (\dot{\bf x} \cdot {\bf t})^2\right) \nonumber \\
& \sim  \frac12 \left( \|{\bf x}'\| + \frac1{\|{\bf x}'\|} - \dot{\bf x}^2 + (\dot{\bf x} \cdot {\bf t})^2 - (\dot{\bf x} \cdot {\bf t})^2 + \frac{{c_0}^2 - 1}4 \left(\frac1{{{\bf x}'}^2} - 1\right)^2 \right).
\end{align}
Noting that up to order $\varepsilon^2$
\begin{equation}
\|{\bf x}'\| + \frac1{\|{\bf x}'\|} = \frac{{{\bf x}'}^2 + 1}{\|{\bf x}'\|} = 2 + \frac{{{\bf x}'}^2 - 2\|{\bf x}'\| + 1}{\|{\bf x}'\|} \sim 2 + \left(\|{\bf x}'\| - 1 \right)^2
\end{equation}
and
\begin{equation}
\left(\frac1{{{\bf x}'}^2} - 1\right)^2 = \left(\frac{{{\bf x}'}^2-1}{{{\bf x}'}^2}\right)^2 = \frac{(\|{\bf x}'\| - 1)^2(\|{\bf x}'\| + 1)^2}{\|{\bf x}'\|^4} \sim 4 \left(\|{\bf x}'\| - 1 \right)^2,
\end{equation}
we obtain
\begin{equation}
\cL \sim  \frac12 \left( 2 - \dot{\bf x}^2 + {c_0}^2 \left(\|{\bf x}'\| - 1 \right)^2 \right),
\end{equation}
or, dropping the additive constant and multiplying by $-1$,
\begin{equation}
\cL \sim  \frac12 \dot{\bf x}^2 - \frac{{c_0}^2}2 \left(\|{\bf x}'\| - 1 \right)^2,
\end{equation}
which is the Newtonian Lagrangian that one would expect for an elastic string with longitudinal speed of sound $c_0$.

One can study axially symmetric rotating string loops using the Lagrangian above, possibly including a gravitational potential of the form $-\frac{M}{\|{\bf x}\|}$. It turns out that equilibrium configurations are always stable in the absence of gravity for nonzero rotation speeds,\footnote{Note that the assumption $\|{\bf x}'\| \sim 1 + \varepsilon$ leads to a transverse speed of sound $s \sim \varepsilon \ll c_0$.} whereas they are always unstable if gravity is included.
%
%
%
\section{Equilibrium configurations in stationary spacetimes} \label{appendixB}
In this appendix we obtain the equilibrium conditions of elastic strings in stationary spacetimes to explain the strange behaviour of stationary strings in the Kerr spacetime. See~\cite{FSZ89, CF89, Carter90, CFH91, Larsen91, FS04, KF08, AA09} for related (but different) formulations.

The metric of a stationary spacetime $(M,g)$ can be written as
\begin{equation}
g = - e^{2\phi} (dt + A_i dx^i)^2 + \gamma_{ij} dx^i dx^j,
\end{equation}
where $\frac{\partial}{\partial t}$ is the timelike Killing vector field, so that $\phi$, $A_i$ and $\gamma_{ij}$ do not depend on $t$. For a stationary string, we can assume without loss of generality that the worldsheet $\Sigma$ is given by $x^2=\ldots=x^n=0$, and that on $\Sigma$
\begin{equation}
g_{|_\Sigma} = - e^{2\phi} dt^2 + \delta_{ij} dx^i dx^j.
\end{equation}
Therefore, the extrinsic curvatures are
\begin{equation}
K^i = - \partial_i \phi \, e^{2\phi} dt^2 + e^{2\phi} \partial_i A_1 dt dx^1 + \widetilde{K}^i_{11} \left(dx^1\right)^2 \qquad (i=2,\ldots,n),
\end{equation}
where $\widetilde{K}^i$ are the extrinsic curvatures of the curve $x^2=\ldots=x^n=0$ in the {\em space manifold}, that is, the $n$-dimensional Riemannian manifold with coordinates $(x^1, \ldots, x^n)$ and metric
\begin{equation}
\gamma = \gamma_{ij} dx^i dx^j.
\end{equation}
Note that each point of the space manifold corresponds to a stationary observer. The Riemannian metric $\gamma$ gives the distances between neighboring stationary observers, as measured in radar experiments, for instance. 

The metric of the worldsheet is given by
\begin{equation}
h = -e^{2\phi} dt^2 + \left(dx^1\right)^2,
\end{equation}
and so its energy-momentum tensor can be written as
\begin{equation}
T = \rho e^{2\phi} dt^2 + p \left(dx^1\right)^2,
\end{equation}
leading to the generalized sail equations
\begin{equation} \label{equilibrium_stationary}
-\rho\, \partial_i \phi + p \widetilde{K}^i_{11} = 0 \qquad (i=2,\ldots,n).
\end{equation}
Since the string is stationary, the $4$-velocity of the its particles is simply
\begin{equation}
U = e^{-\phi} \frac{\partial}{\partial t},
\end{equation}
and it is easy to compute that on $\Sigma$
\begin{equation}
\nabla_U U = \partial_i \phi \frac{\partial}{\partial x^i}.
\end{equation}
We can think of
\begin{equation}
G = - \nabla_U U
\end{equation}
as the gravitational field measured by the stationary observers, and so \eqref{equilibrium_stationary} can be interpreted as stating that the orthogonal component of the gravitational force is being balanced by the tension of the string times its geodesic curvature:\footnote{Note that with our conventions the geodesic curvature vector points away from the center of curvature.}
\begin{equation}
\rho\, G^i + p \widetilde{K}^i_{11} = 0 \qquad (i=2,\ldots,n).
\end{equation}

For a stationary string, $\rho$ and $p$ do not depend on $t$. Therefore, in the equations of conservation of energy-momentum along the worldsheet,
\begin{equation}
\overline{\nabla}_A \left[(\rho+p) U^A U^B + p h^{AB} \right] = 0,
\end{equation}
only the component orthogonal to the Killing vector field is nontrivial:
\begin{equation} \label{equilibrium_stationary_1}
(\rho + p) G^1 - \frac{dp}{d x^1} = 0.
\end{equation}
We can think of this equation as stating that the tangential component of the gravitational force is being balanced by the variation in the tension of the string.

\subsection{Stationary loops in the Kerr spacetime}
It is easy to obtain from the Kerr metric in Boyer-Lindquist coordinates that in the equatorial plane (outside the ergosphere) the space metric is given by
\begin{equation}
\gamma = \frac{r^2}{\Delta} dr^2 + \frac{\Delta}{V} d\varphi^2,
\end{equation}
where $V=1-\frac{2M}{r}$. Since
\begin{equation}
\frac{d}{dr} \frac{\Delta}{V} = \frac{2}{r^2 V} \left( r^3 V^2 - M a^2 \right),
\end{equation}
we conclude that the circles of constant $r$ curve inwards for large $r$, are {\em geodesic} for $r^3 V^2 - M a^2=0$, and curve {\em outwards} for smaller values of $r$.\footnote{This is reminiscent of the behavior of the optical metric near the photon sphere of the Schwarzschild solution, used in \cite{AL86} to explain the reversal of the centrifugal force discovered in \cite{AL88}.} This explains why the compression of a stationary axially symmetric string loop in the equatorial plane of the Kerr spacetime blows up for $r^3 V^2 - M a^2=0$, and becomes a tension for smaller values of $r$.

\subsection{Lowering a mass using an elastic string}
An obvious equilibrium configuration is to place the string along a geodesic of the space metric $\gamma$ tangent to the gravitational field (for example radial geodesics in the case of the Kerr equatorial plane). It is interesting to consider the case when the spacetime is asymptotically flat and the string is being used to hang a mass $m$ from some point at spatial infinity. From~\eqref{equilibrium_stationary_1} we have
\begin{equation} \label{equilibrium_stationary_2}
\frac{d\phi}{d x^1} = -\frac1{\rho + p}\frac{dp}{d x^1},
\end{equation}
and from~\eqref{rhop} we obtain
\begin{equation}
\frac{dp}{d x^1} = \left[2n^2F''(n^2)+F'(n^2)\right]\frac{dn^2}{d x^1}.
\end{equation}
Substituting in \eqref{equilibrium_stationary_2} and integrating yields
\begin{equation} \label{potential}
e^{\phi(x^1)} = \frac{F'(n^2(\infty)) n(\infty)}{F'(n^2(x^1)) n(x^1)},
\end{equation}
where we have set $\phi(\infty)=0$.

The energy obtained at infinity by slowly lowering the mass $m$ from infinity to some finite position $x^1$ is given by the work done at infinity (on a pulley, say) by the force $-p(\infty)$, that is,
\begin{equation}
W = \int_{x^1}^{\infty} -p(\infty) \frac{n(x)}{n(\infty)} dx,
\end{equation}
where the correction factor $\frac{n(x)}{n(\infty)}$ is the ratio between the string's stretching at infinity and its stretching at $x$; note carefully that both $p(\infty)$ and $n(\infty)$ in this integral depend on $x$. Using~\eqref{rhop} and \eqref{potential} we obtain,
\begin{align}
W & = \int_{x^1}^{\infty} -\left[2n^2(\infty) F'(n^2(\infty)) - F(n^2(\infty)) \right] \frac{n(x)}{n(\infty)} dx \nonumber \\
& = \int_{x^1}^{\infty} \left[-2n^2(x) F'(n^2(x)) e^{\phi(x)} + F(n^2(\infty)) \frac{n(x)}{n(\infty)} \right] dx \label{integral} \\
& = \int_{x^1}^{\infty} -p(x) e^{\phi(x)} dx + \int_{x^1}^{\infty} \left[F(n^2(\infty)) \frac{n(x)}{n(\infty)} - F(n^2(x)) e^{\phi(x)} \right] dx. \nonumber 
\end{align}
Since the mass is lowered quasi-statically, and so is in equilibrium for each value of $x$, we must have
\begin{equation}
-p(x) - m \frac{d\phi}{dx} = 0.
\end{equation}
Therefore the first term in \eqref{integral} is simply
\begin{equation}
W_m = \int_{x^1}^{\infty} m \frac{d\phi}{dx} e^{\phi(x)} dx = m \left(1 - e^{\phi(x^1)}\right),
\end{equation}
and can be interpreted as the potential energy lost by the mass. The second term must then be the potential energy lost by the string (which has mass itself). It is interesting to note that for strings with constant longitudinal speed of sound this energy does not depend on $m$. Indeed, if we write \eqref{rigid} as
\begin{equation}
F(n^2) = \frac{\rho_0}{c^2+1} \left((n^2)^\frac{c^2+1}2 + c^2\right),
\end{equation}
then \eqref{potential} yields
\begin{equation}
n(\infty) = e^{\frac{\phi(x)}{c^2}} n(x),
\end{equation}
and so we can write the second term in \eqref{integral} as\footnote{Note that in general $W_s$ may be infinite, but this can be easily remedied by replacing $x^1=\infty$ with $x^1=L$ for sufficiently large $L$.}
\begin{equation}
W_s = \int_{x^1}^{\infty} \left[F(n^2(\infty)) \frac{n(x)}{n(\infty)} - F(n^2(x)) e^{\phi(x)} \right] dx = \frac{\rho_0 c^2}{c^2 + 1} \int_{x^1}^{\infty} \left(1-e^{\phi(x)}\right) dx. 
\end{equation}
%
%
%
\section{Coefficients for the stability analysis} \label{appendixC}

The coefficients of equations \eqref{motion_linear1}, \eqref{motion_linear2} and \eqref{motion_linear3} are:

\begin{align*}
& A=- \frac{(R-M) (s^2+c^2 ) \sqrt{(R-2M)s^2+M}}{k (R-2M)^{3/2} \sqrt{R^3 \left[ (R-2M)+M s^2 \right]}}; \\
& B=- \frac{c^2 R (1-s^2) \sqrt{(R-2M)s^2+M}}{k^2 \sqrt{R-2 M} \sqrt{R^3 \left[(R-2M) +M s^2\right]}}; \\
& C=- \frac{\left[(R-2M)s^2+M \right] \left[(R-M)c^2 +(R-3M)s^2-2(R-2M) \right]}{(R-3 M) (R-2 M)^2 (1-s^2 )}; \\
& D=- \frac{2 c^2 R \left[(R-2M)s^2+M \right]}{k (R-2 M) (R-3 M)}; \\
& E=- \frac{R \sqrt{R^3 \left[(R-2M)+M s^2 \right]} \sqrt{(R-2M)s^2+M}\, \left[M (c^2-s^2)-(R-2 M) (1-c^2 s^2)\right]}{(R-2 M)^{3/2} (R-3 M)^2 (1-s^2 )}; \\
& F= \frac{ c^2\left(-3 M^3+7 M^2 R-5 M R^2+ R^3\right)-3 M (R-2M)^2-M s^4 \left(12 M^2-8 M R+R^2\right)-s^2  (R-3 M)^3}{R^3 (R-3 M) (R-2 M) (1-s^2 )}; \\
& G= \frac{(R-M) (s^2+c^2 )}{k R^2}; \\
& H=- \frac{s^2 (R-3 M) (1-s^2)}{k^2 R^2 \left[(R-2M)+M s^2\right]}; \\
& I= \frac{\sqrt{R^3 \left[(R-2M)+M s^2\right]}\, \sqrt{(R-2M)s^2+M} \left[c^2(R-M)+s^2 (R-3M)-2(R-2M) \right]}{R^2 \sqrt{R-2 M}\, (R-3 M) (1-s^2)}; \\
& J= - \frac{2 s^2 R \sqrt{(R-2M)s^2+M}}{k \sqrt{R-2 M}\, \sqrt{R^3 \left[(R-2M)+M s^2\right]}}; \\
& L= \frac{R (1+s^2)}{R-3 M}; \\
& N= \frac{M (1+s^2)}{R^2 (R-3 M)} .
\end{align*}
The quantities $s$ and $c$ are taken at the particular equilibrium under consideration.

\end{document}